\documentclass[letterpaper,11pt]{article}
\usepackage[utf8]{inputenc}

\pdfoutput=1

\usepackage{amsmath, amsthm, amssymb}
\usepackage[linesnumbered,vlined]{algorithm2e}
\usepackage[table,xcdraw]{xcolor}
\usepackage{mathtools}
\usepackage[numbers]{natbib}
\usepackage{comment} 

\usepackage{color-edits} 

\usepackage[most]{tcolorbox}
\usepackage{xfrac}
\usepackage{hyperref}
\usepackage{multirow}
\usepackage{caption}
\usepackage{bm}
\usepackage{newfloat}
\usepackage{enumitem}
\usepackage{xpatch}
\usepackage{soul}
\usepackage{bbm}
\usepackage{tikz}
\usepackage{titlesec}

\usepackage{fancyhdr}

\makeatletter
\newcommand{\thickhline}{%
    \noalign {\ifnum 0=`}\fi \hrule height 1.5pt
    \futurelet \reserved@a \@xhline
}
\makeatother

\usepackage[margin=1in]{geometry}

\allowdisplaybreaks

\definecolor{mygreen}{RGB}{20,120,60}

\title{Stochastic Minimum Vertex Cover in General Graphs:\\
 a $3/2$-Approximation}

\author{ Mahsa Derakhshan\thanks{Northeastern University} \and Naveen Durvasula \thanks{University of California, Berkeley} \and Nika Haghtalab \footnotemark[2]}

\newcommand{\E}[0]{\ensuremath{\mathbb{E}}}

\newcommand{\opt}{\ensuremath{\textsc{opt}}\xspace}
\newcommand{\OPT}{\ensuremath{\textsc{OPT}}\xspace}

\newcommand{\cru}[1]{\ensuremath{G}}
\newcommand{\inm}[1]{\ensuremath{Z}}

\newcommand{\apx}[0]{\ensuremath{\epsilon}}

\newcommand{\pa}[1]{\left(#1\right)}
\newcommand{\bra}[1]{\left[#1\right]}
\newcommand{\set}[1]{\left\{#1\right\}}
\newcommand{\abs}[1]{\left|#1\right|}

\newcommand{\mvc}[1]{\text{MVC}\left(#1 \right)}
\SetKwRepeat{Do}{do}{while}

\renewcommand{\epsilon}[0]{\ensuremath{\varepsilon}}

\let\originalleft\left
\let\originalright\right
\renewcommand{\left}{\mathopen{}\mathclose\bgroup\originalleft}
\renewcommand{\right}{\aftergroup\egroup\originalright}

\addauthor{md}{red}    
\addauthor{nh}{magenta} 
\addauthor{nd}{green} 

\newtheorem{theorem}{Theorem}[section]
\newtheorem{question}{Question}

\newtheorem{lemma}[theorem]{Lemma}

\newtheorem{definition}[theorem]{Definition}
\newtheorem{claim}[theorem]{Claim}

\newtheorem{observation}[theorem]{Observation}
\newtheorem{remark}[theorem]{Remark}

\makeatletter
\def\thm@space@setup{%
  \thm@preskip= 0.2cm
  \thm@postskip=\thm@preskip 
}
\makeatother

\definecolor{mygreen}{RGB}{20,155,20}
\definecolor{myred}{RGB}{195,20,20}
\definecolor{linkcolor}{RGB}{0,0,230}
\definecolor{mylightgray}{RGB}{230,230,230}
\definecolor{verylightgray}{RGB}{240,240,240}
\definecolor{commentcolor}{RGB}{120,120,120}

\setlength{\algomargin}{13pt}

\SetCommentSty{mycommfont}

\hypersetup{
     colorlinks=true,
     citecolor= mygreen,
     linkcolor= myred,
     urlcolor= mygreen
}

\newcommand{\mc}[1]{\ensuremath{\mathcal{#1}}}

\newcounter{myalgctr}

\newenvironment{tbox}{
\par\addvspace{0.2cm}
\begin{tcolorbox}[width=\textwidth,
                  enhanced,
                  boxsep=2pt,
                  left=1pt,
                  right=1pt,
                  top=4pt,
                  boxrule=1pt,
                  arc=0pt,
                  colback=white,
                  colframe=black,
                  unbreakable
                  ]
}{
\end{tcolorbox}
}

\newenvironment{tboxh}{
\par\addvspace{0.2cm}
\begin{tcolorbox}[width=\textwidth,
                  enhanced,
                  boxsep=2pt,
                  left=1pt,
                  right=1pt,
                  top=4pt,
                  boxrule=1pt,
                  arc=0pt,
                  colback=white,
                  colframe=black,
                  unbreakable,
                  float=t
                  ]
}{
\end{tcolorbox}
}

\newcommand{\tboxhrule}[0]{\vspace{0.1cm} \hrule \vspace{0.2cm}}

\newenvironment{titledtbox}[1]{\begin{tbox}#1 \tboxhrule}{\end{tbox}}
\newenvironment{titledtboxh}[1]{\begin{tboxh}#1 \tboxhrule}{\end{tboxh}}

\newenvironment{tboxalg2e}[1]{
\refstepcounter{myalgctr}
	\begin{titledtbox}{\textbf{Algorithm \themyalgctr.} #1}
	\vspace{-0.2cm}
}
{
	\vspace{-0.3cm}
	\end{titledtbox}
}

\begin{document}
\setlength{\parskip}{0.2em}

\maketitle
\thispagestyle{empty}


\begin{abstract}
We study the stochastic vertex cover problem. In this problem, $G = (V, E)$ is an arbitrary known graph and $\mathcal{G}^\star$ is an unknown random subgraph of $G$ where each edge $e$ is realized independently with probability $p$. Edges of $\mathcal{G}^\star$ can only be verified using edge queries. The goal in this problem is to find a minimum vertex cover of $\mathcal{G}^\star$ using a small number of queries.

\smallskip
Our main result is designing an algorithm that returns a vertex cover of $\mathcal{G}^\star$ with size at most $(3/2+\epsilon)$ times the expected size of the minimum vertex cover, using only $O(n/\epsilon p)$ non-adaptive queries. This improves over the best-known 2-approximation algorithm  by Behnezhad, Blum and Derakhshan [SODA'22] who also show that $\Omega(n/p)$ queries are necessary to achieve any constant approximation. 

\smallskip
Our guarantees also extend to instances where edge realizations are not fully independent. We complement this upperbound with a tight $3/2$-approximation lower bound for stochastic graphs whose edges realizations demonstrate mild correlations. 

\end{abstract}

{

\hypersetup{
     linkcolor= black
}

\clearpage
}

\pagenumbering{arabic}

\section{Introduction}
In the \emph{stochastic vertex cover} problem, we are given an arbitrary base graph $G = (V, E)$ with $n$ vertices but  we do
not know which edges in $E$ actually exist. Rather each edge $e\in E$ is realized independently
with a given existence probability $p_e \in (0,1]$, forming a subgraph $\mathcal{G}^\star$. Our goal is to find a minimum vertex cover of $\mathcal{G}^\star$.
While $\mathcal{G}^\star$ is unknown, one can verify its edge set by querying edges $e \in E$. 
Of interest, then, are algorithms that \emph{query a small subset of edges} and, based on the outcome of these queries, find a \emph{near-optimal vertex cover} of $\mathcal{G}^\star$.
How small should the set of queried edged 
be?
The gold standard in these problems is to non-adaptively issue a number of queries that is linear in  the number of vertices and polynomial in inverse probability 
$p = \min_{e\in E} p_e$. 

While these stochastic settings are primarily concerned with information theoretical questions, most positive results have  focused on problems whose 
non-stochastic counterparts admit computationally-efficient algorithms. Instances include minimum spanning tree~\cite{DBLP:conf/soda/GoemansV04, DBLP:journals/rsa/GoemansV06}, all pairs shortest paths~\cite{DBLP:journals/rsa/Vondrak07}, maximum matching~\cite{blumetal,blumetalOR,AKL16,AKL17,BR18,YM18,sagt19,sosa19,soda19,stoc20,focs20}, $2$-approximate minimum vertex cover, and bipartite minimum vertex cover~\cite{behnezhad2022stochastic}. This emphasis on efficiently solvable problems is not accidental. By and large, structurally-simple properties and heuristics that had long played a key role in understanding and designing computationally efficient algorithms have been used to guide an algorithm in its choice of queries, 
e.g., the Tutte-Berge witness sets~\cite{AKL16}, short augmenting paths~\cite{blumetal}, local computation~\cite{stoc20}, and greedy heuristics~\cite{behnezhad2022stochastic}.

On the other hand, for vertex cover beyond a $2$-approximation, and other computationally hard regimes, lack of 
structurally-simple properties has been a barrier towards solving the stochastic variants of the problems. A natural question here is whether it is possible to obtain any positive results for problems that lack these structure? In this work, we consider this question for the minimum vertex cover problem.

\begin{question}
Can we achieve a better than $2$-approximation for the stochastic minimum vertex cover problem despite the lack of computationally efficient algorithms for this problem?
\end{question}

Our paper answers this question in the affirmative. At a high level, we introduce an algorithm that returns a  vertex cover of $\mathcal{G}^\star$ with probability $1$ whose expected size is at most $3/2+\epsilon$ times that of the minimum vertex cover of $\mathcal{G}^\star$, for any desirably small $\epsilon$. 
The following theorem, which is formally stated in Section~\ref{sec:analysis}, presents our main result.

\begin{theorem}[Upper-bound] \label{thm:upper-mvc}
For any $\epsilon\in (0,0.1)$, there is an algorithm (namely Algorithm~\ref{alg1}) that returns a $\left( 3/2+\epsilon \right)$-approximate solution for the stochastic minimum vertex cover problem
using $O\left(n/\epsilon p\right)$ queries.

\end{theorem} 

The number of queries used in this algorithm is asymptotically optimal since \cite{behnezhad2022stochastic} show that $\Omega(n/p)$ queries are necessary to achieve any constant approximation ratio.
Interestingly, the approximation guarantees of Theorem~\ref{thm:upper-mvc} continue to hold even if edges of $\mathcal{G}^\star$ are correlated (see Section~\ref{sec:lower}.)
This allows us to handle \emph{mild correlations} in the realization of $\mathcal{G}^\star$.  That is, even if  $O(n)$ edges are allowed to be realized in a correlated way\footnote{For a formal definition of a stochastic process with $O(n)$ correlated edges, see Definition~\ref{def:mildcorr}}, we can still achieve  a $\left( 3/2+\epsilon \right)$-approximate solution using only  $O\left(n/\epsilon p\right)$ queries. 
Our next result shows that for such mildly correlated processes, a $3/2$-approximation is the best one can hope to get when using  $O\left(n/\epsilon p\right)$ queries. 
The following theorem, which is formally stated in Section~\ref{sec:lower}, presents our lower bound.

\begin{theorem}[Informal Lower Bound] \label{thm:lower-mvc}
There is a stochastic process for generating $\mathcal{G}^\star$ with $O(n)$ correlated edges, such that any algorithm that returns a vertex cover of $\mathcal{G}^\star$ using only $O(n/\epsilon p)$ queries, must have an approximation ratio of at least $(3/2 - \epsilon)$ with probability $1 - o(1)$.
\end{theorem}

Theorems~\ref{thm:upper-mvc} and \ref{thm:lower-mvc} together demonstrate that our results are tight if there are mild correlations between the edges of $\cal{G}^\star$, which can readily exist in practical applications. Moreover, this shows that to further go beyond the $3/2$-approximation one must fully leverage independence across all edges. Indeed, a similar characterization was given for the stochastic matching problem by~\cite{AKL16} and~\cite{sosa19}.

\paragraph{Mild Correlations and Independence in Stochastic Optimization.}
Correlated realizations have been considered in several stochastic combinatorial optimization problems~\cite{agrawal2012price,ahmed2013probabilistic,AKL16,sosa19,blumetalOR,gupta2011approximation}.
For the stochastic matching problem, mildly correlated graphs were first considered by~\cite{AKL16} who, in addition to their algorithmic results,  provide a construction of stochastic graphs with only $O(n)$ correlated edges which does not admit better than $2/3$-approximation. 
Our lower-bound too uses a similar construction with only $O(n)$ correlated edges. 
Later,~\cite{sosa19} use an elegant matching sparsifier of~\cite{DBLP:conf/icalp/BernsteinS15}  to design a $2/3$-approximation algorithm closing the gap for mildly correlated graphs.
The tight characterization of what is possible for mildy correlated graphs in the stochastic matching problem also paved the way for obtaining a
 $(1-\epsilon)$ approximation ratio for the fully independent setting by~\cite{stoc20}. To fully leverage the independence across all edges,~\cite{stoc20} crucially utilize $(1-\epsilon)$-approximate matching algorithms 
designed in the \textsf{LOCAL} model of computation.
This subsequently resulted in a $(1+\epsilon)-$approximate stochastic vertex cover for bipartite graphs~\cite{behnezhad2022stochastic}.
However, the absence of better than 2-approximation algorithms for minimum vertex cover in the \textsf{LOCAL} model may be seen as an obstacle in breaking the $3/2$ barrier for the stochastic vertex cover on fully independent graphs.

\paragraph{Algorithm Design Overview.}
To achieve a $3/2$-approximation, we start with two approaches to solving the stochastic minimum vertex cover problem that give $2$-approximations. 
Our final algorithm is the result of carefully combining insights from these two approaches.

All of the algorithms in this work follow the same blueprint: We consider a set $P\subseteq V$ and the induced subgraph $H = G[V\setminus P]$. We then query all the edges of $H$ to realize $\mathcal{H}^\star$ and take its vertex cover $M$. Our algorithm then returns $S=P\cup M$. We note that $S$ is a vertex cover of $\mathcal{G^\star}$, since any edge of $\mathcal{G}^\star$ that is not covered by $P$ is covered by $M$. Our algorithms and their guarantees only differ in their choice of $P$.

We give two $2$-approximation algorithms that employ different principles in their choice of $P$.
Algorithm~\ref{alg0} hallucinates a random subgraph of $G$, namely $\mathcal{G}_1$, and uses $P$ that is a minimum vertex cover of $\mathcal{G}_1$.
On the other hand, Algorithm~\ref{alg0-2} estimates the probability that any vertex $v\in V$ would belong to the minimum vertex cover of $\mathcal{G}^\star$, denoted by $c_v$, and uses
$P = \{v\mid c_v> 1/2\}$.
While both of these algorithms achieve a $2$-approximation in the worst-case, their performance guarantees differs based on the distribution of $c_v$'s. In particular, both of these algorithms over-include some vertices --- i.e., include a vertex that does not belong to the minimum vertex cover --- but they differ in the type of vertices they over-include. 
As our analysis shows, the first algorithm significantly  over-includes vertices that have a very small $c_v$, but the second algorithm only over-includes vertices with $c_v > \frac 12$.

Our $3/2$-approximation algorithm (Algorithm~\ref{alg1}) combines these two insights to define the set  $P= P_1 \cup P_2$.
It first chooses $\tau$ that carefully balances the contribution of vertices with $c_v>\tau$ to the expected size of the minimum vertex cover. 
For vertices whose $c_v\in [1-\tau - \epsilon, \tau]$, we use the style of Algorithm~\ref{alg0} and only include them in $P_1$ if they also belong to a minimum vertex cover of a hallucinated random subgraph.
For the set of vertices with $c_v > \tau$, we use the style of Algorithm~\ref{alg0-2} and include all of them in $P_2$.
Our analysis carefully balances out the over-inclusion of vertices  to achieve a $3/2$-approximation. 

As presented above, our algorithm requires the knowledge of $c_v$s , i.e., the probability that a vertex belongs to the minimum vertex cover. However, our algorithms and analysis extend immediately to use estimated values of $c_v$, which can be calculated efficiently when given an oracle for the minimum vertex cover problem. Moreover, our approach can readily work with significant mis-estimates of $c_v$, e.g., it  achieves $3/2+\epsilon$ times the
 approximation factor of any 
 oracle for the minimum vertex cover problem.
We discuss this further in Section~\ref{section:call}.

\section{Notation}


We work with a known arbitrary graph $G = (V, E)$ and 
existence probability $p_e\in(0,1]$ for each $e\in E$.
We consider a random subgraph $\mathcal{G}^\star$ in which every edge $e\in E$ is realized with probability $p_e$, independently. We denote $p = \min_{e\in E} p_e$.

Let $\text{MVC}$ denote a function that given any input graph outputs a minimum vertex cover of that graph. We may also refer to this as the \emph{minimum vertex cover} oracle. 
We define $\OPT=\text{MVC}(\mathcal{G}^\star)$ to be the optimal solution of our problem. Note that \OPT is a random variable since $\mathcal{G}^\star$ itself is a random realization of $G$.
We also let $\opt=\E[|\OPT|]$ be the expected size of this optimal solution. Moreover, for any vertex $v\in V$, we define $$c_v=\Pr[v\in \OPT],$$ which is the probability that $v$ joins the optimal solution. This implies $\sum_{v\in V} c_v = \opt.$
Similarly, for any edge $e=(u,v)$ in graph $G$, we let $c_e$ be the probability that this edge is covered by $\OPT$. That is, 
$$c_{(u,v)}=\Pr[u\in \OPT \text{ or } v\in \OPT].$$
When $c_v$s and $c_e$s are not known in advance, we use a polynomial number of calls to a minimum vertex cover oracle to estimate them within arbitrary accuracy. See Section~\ref{section:call} for more details regarding these estimates.

\section{Warm Up -- Beating 2-Approximation}

In this section, we start by discussing two simplified variants of our $3/2$-approximate algorithm. Both of these algorithms have the worst case approximation ratio of $2$. However, their performance varies for different instances of the problem. One of them has a better performance if a large portion of $\OPT$ comes from vertices with smaller $c_v$'s while the other one prefers a large portion of $\OPT$ to be from vertices with larger $c_v$'s. 
After discussing these two algorithms, we will show how, due to their opposing nature, running the best of the two algorithms beats the 2-approximation ratio. 
Finally, we explore how this observation inspires the design of our $3/2$-approximation algorithm.

All the algorithms we design in this paper follow a similar framework. In all of them, we first pick a subset of vertices $P$ and commit to adding them to the final vertex cover. As a result, we only need to query the edges not covered by these vertices. We denote this subgraph by $H$. Formally, $H = G[V\setminus P]$ is the subgraph induced in $G$ by $V\setminus P$. After querying $H$, we find a vertex cover of its realized edges which we denote by $M$. Finally, we output $P\cup M$.
Our algorithms mainly differ in their choice of $P$. See~Figure~\ref{fig:algdesc} for an illustration of our framework.

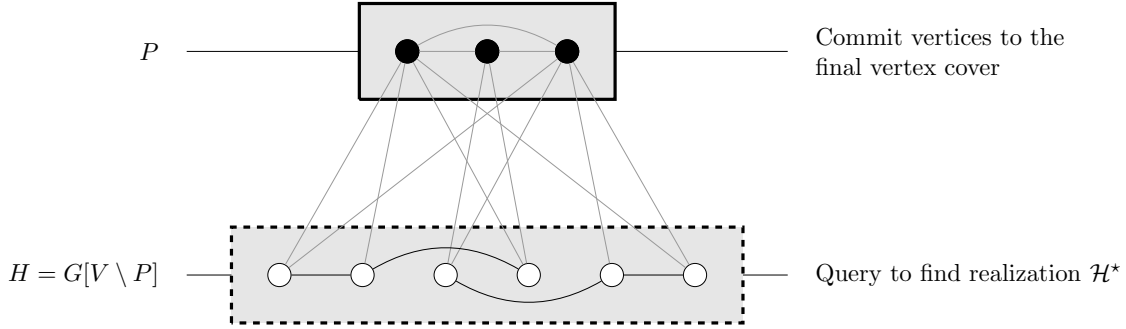
\begin{figure}[h]
    \centering
    \scalebox{0.85}{\begin{tikzpicture}
\fill [color=black!10] (1,1) rectangle (9,2.5);
\fill [color=black!10] (3,4.5) rectangle (7,6);

\draw[dashed, line width = 0.5mm] (1,1) --(1,2.5) -- (9,2.5) -- (9,1) -- (1,1);

\draw[line width = 0.5mm] (3,4.5) --(7,4.5) -- (7,6) -- (3,6) -- (3,4.5);

\node[draw, shape=circle, fill=black] (1p) at (3.75 ,5.25) {};
\node[draw, shape=circle, fill=black] (2p) at (5 ,5.25) {};
\node[draw, shape=circle, fill=black] (3p) at (6.25 ,5.25) {};

\node[draw, shape=circle, fill=black] (1p) at (3.75 ,5.25) {};
\node[draw, shape=circle, fill=black] (2p) at (5 ,5.25) {};
\node[draw, shape=circle, fill=black] (3p) at (6.25 ,5.25) {};

\node[draw, shape=circle, fill=white] (1h) at (1.75 ,1.75) {};
\node[draw, shape=circle, fill=white] (2h) at (3.05 ,1.75) {};
\node[draw, shape=circle, fill=white] (3h) at (4.35 ,1.75) {};
\node[draw, shape=circle, fill=white] (4h) at (5.65 ,1.75) {};
\node[draw, shape=circle, fill=white] (5h) at (6.95 ,1.75) {};
\node[draw, shape=circle, fill=white] (6h) at (8.25,1.75) {};

\draw[color=black!40] (1p) edge[bend left=30] (3p);
\draw[color=black!40] (1p) -- (2p);
\draw[color=black!40] (2p) -- (3p);

\draw[color=black!40] (1p) -- (1h);
\draw[color=black!40] (1p) -- (2h);
\draw[color=black!40] (1p) -- (4h);
\draw[color=black!40] (1p) -- (6h);

\draw[color=black!40] (2p) -- (3h);
\draw[color=black!40] (2p) -- (4h);

\draw[color=black!40] (3p) -- (1h);
\draw[color=black!40] (3p) -- (3h);
\draw[color=black!40] (3p) -- (5h);
\draw[color=black!40] (3p) -- (6h);

\draw (1h) -- (2h);
\draw (2h) edge[bend left=30] (4h);
\draw (3h) edge[bend right=30] (5h);
\draw (5h) -- (6h);

\draw (3,5.25) -- (0.3, 5.25);
\draw (7, 5.25) -- (9.7,5.25);

\node[align=right, anchor=east] (plabel) at (0, 5.25) {$P$};
\node[align=right, anchor=east] (hlabel) at (0, 1.75) {$H = G[V \setminus P]$};

\draw (1,1.75) -- (0.3, 1.75);
\draw (9, 1.75) -- (9.7,1.75);

\node[align=left, anchor=west] (pdesc) at (10, 5.25) {Commit vertices to the\\final vertex cover};
\node[align=left, anchor=west] (hdesc) at (10, 1.75) {Query to find realization $\mathcal H^\star$};
\end{tikzpicture}}
    \caption{\textbf{A commit, then query approach.} After committing vertices of $P$ to the final vertex cover, we query the subgraph of edges not covered by $P$ which we denote by $H$.  This is the subgraph of $G$ induced by $V\setminus P$. The output of our algorithm is $P\cup M$ where $M$ is a vertex cover of $\mathcal{H}^\star$.}
    \label{fig:algdesc}
\end{figure}

We make the following observation about our algorithm framework.
\begin{observation}\label{obs:itisavertexcover}
Let $P\subseteq V$ and $H=G[V\setminus P]$ be the induced subgraph on $V\setminus P$. Let $\mathcal{H}^\star$ be the realization of the edges of $H$  and $M$ be a vertex cover of $\mathcal{H}^\star$.
Then,  $P\cup M$ is a vertex cover of $\mathcal{G}^\star$.
\end{observation}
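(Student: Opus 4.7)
The plan is a direct case analysis on an arbitrary realized edge. I would fix any edge $e=(u,v)\in\mathcal{G}^\star$ and show it is covered by $P\cup M$.

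First, consider the case where $\{u,v\}\cap P\neq\emptyset$. Then $e$ trivially has an endpoint in $P\subseteq P\cup M$, so it is covered.

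Otherwise, both $u$ and $v$ lie in $V\setminus P$. By definition of the induced subgraph $H=G[V\setminus P]$, the edge $e$ belongs to $H$. Since $e$ is realized in $\mathcal{G}^\star$ and edges of $H$ inherit the same realization (they are edges of $G$), we have $e\in\mathcal{H}^\star$. Because $M$ is a vertex cover of $\mathcal{H}^\star$, at least one of $u,v$ lies in $M$, and hence in $P\cup M$.

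There is no real obstacle here — the statement is essentially unpacking the definition of ``induced subgraph'' together with ``vertex cover,'' and the two cases above exhaust all possibilities for the endpoints of a realized edge. The only thing worth being careful about is noting that an edge of $\mathcal{G}^\star$ whose endpoints both avoid $P$ is necessarily an edge of $H$ (not merely of $G$), which is immediate from $H$ being the induced subgraph on $V\setminus P$.
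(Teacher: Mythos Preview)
Your proposal is correct. The paper states this as an observation without an explicit proof, and your two-case argument on whether a realized edge has an endpoint in $P$ is exactly the intended (and essentially only) way to verify it.
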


The simplest way of picking $M$ is for it to be a minimum vertex cover of all the realized edges of $H$ (i.e. $\mathcal{H}^\star$). However, sometimes, for the sake of analysis we require $M$ to be a vertex cover of $H$ satisfying a certain property. 
In particular, we want $\Pr[v\in M]=c_v$. We achieve this by letting $M$ be a minimum vertex cover of all the realized edges of $H$ and a  hallucination of $G\setminus H$. We will explain this in more detail in the following algorithm.

\paragraph{The first 2-approximate algorithm.}  To construct the subset $P$, this algorithm (formally stated as Algorithm~\ref{alg0}) hallucinates a random realization of $G$ and lets $P$ be its minimum vertex cover. Following the aforementioned framework, the next step is to find $M$: a vertex cover of $\mathcal{H}^\star$.
In order to do that, the algorithm hallucinates another realization of $G$ by including any edge $e\in G\setminus H$ with probability $p_e$, independently, and including edges of $\mathcal{H}^\star$.
Finally, it finds an MVC of this realization denoted by $M$, and outputs $P\cup M$. 
  \vspace{2 mm}
 \begin{tboxalg2e}{A $2$-approximation stochastic vertex cover algorithm}
\begin{algorithm}[H]
	\DontPrintSemicolon
	\SetAlgoSkip{bigskip}
	\SetAlgoInsideSkip{}
	\label{alg0}
	Let $\mathcal{G}_{1}$ be a random realization of $G$ containing any edge $e\in G$ independently w.p. $p_e$.\;
		  $P \gets \text{MVC}(\mathcal{G}_{1})$\;
	Let $H$ be the subgraph induced in $G$ by $V\setminus P$.\;
Query subgraph $H$ and let $\mathcal{H}^\star$ be its realization.\;
	Let $\mathcal{G}_2$ be a subgraph of $G$ containing  all the edges in $\mathcal{H}^\star$ and any edge $e\in G\setminus H$ independently w.p. $p_e$.\;
$M\gets \text{MVC}(\mathcal{G}_{2})$\;
	Return $P\cup M$
\end{algorithm}
\end{tboxalg2e}
\vspace{3 mm}

We will first prove that this algorithm queries only  $O(n/p)$ edges, that is $|H|=O(n/p)$. This is due to the fact that, by definition, none of the edges of $H$ are realized in $\mathcal{G}_{1}$ since otherwise, one of its vertices should be in vertex cover $P$ (which is a contradiction). For any subgraph with more than $n/p$ edges the probability of none of its edges being in $\mathcal{G}_{1}$ is at most $(1-p)^{n/p}$. Moreover since $H$ is an induced subgraph of $G$, there are at most $2^n$ possibilities for it. By an application of union bound, we see that w.h.p., $H$ has at most $n/p$ edges. That is:
 $$\Pr[|H|\leq n/p] \geq 1- 2^n(1-p)^{n/p}\geq 1-\frac{2^n}{e^{n}}=1-(2/e)^n. $$ We state this observation below for future reference.

\begin{observation}\label{claim:nkergkjer}
Let $\mathcal{G}$ be a random realization of $G$ containing any edge $e\in G$ independently with probability $p_e$, and let $M$ be a minimum vertex cover of $\mathcal{G}$. The number of edges in $G$ not covered by $M$ is $O(n/p)$.
\end{observation}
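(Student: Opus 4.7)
The plan is to give essentially the same argument that was just sketched for the concrete case $H=G[V\setminus P]$, but to make it uniform over all possible choices of the minimum vertex cover. I would interpret the statement as a high-probability claim (which, as in the preceding paragraph, gives $|H|\le n/p$ with probability at least $1-(2/e)^n$).

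First I would set up the right subgraph: let $M$ be a minimum vertex cover of the random realization $\mathcal{G}$, and let $H=G[V\setminus M]$ be the subgraph of $G$ induced on the complement of $M$. Every edge of $G$ not covered by $M$ has both endpoints in $V\setminus M$, so the set of uncovered edges is exactly $E(H)$. The key structural observation is that \emph{no} edge of $H$ can be realized in $\mathcal{G}$, since otherwise $M$ would fail to be a vertex cover of $\mathcal{G}$ — a contradiction with $M=\mathrm{MVC}(\mathcal{G})$.

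Next I would upper-bound the probability that $|H|>n/p$ by a union bound over candidate cover sets. For any fixed subset $S\subseteq V$, write $H_S:=G[V\setminus S]$; if $|E(H_S)|>n/p$, then the probability that none of its edges are realized in $\mathcal{G}$ is at most $(1-p)^{n/p}\le e^{-n}$, using independence of the edges of $G$ and $1-p\le e^{-p}$. Since $M$ is determined by $\mathcal{G}$ but ranges over at most $2^n$ subsets of $V$, a union bound over the choice of $M$ gives
\[
\Pr\!\left[|E(H)|>n/p\right]\ \le\ \sum_{S\subseteq V}\Pr\!\left[|E(H_S)|>n/p\text{ and no edge of }H_S\text{ is realized}\right]\ \le\ 2^n\,(1-p)^{n/p}\ \le\ (2/e)^n.
\]
Therefore with probability at least $1-(2/e)^n$ the number of uncovered edges is at most $n/p$, which is $O(n/p)$ as claimed.

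The only subtle point — and the reason a naive independence argument does not work directly — is that $M$ (and hence $H$) depends on $\mathcal{G}$, so one cannot simply multiply the non-realization probabilities edge by edge conditional on $H$. The union bound over the $2^n$ possible induced subgraphs is what decouples this dependence, and the gap between $2^n$ and $e^{n}$ leaves exponentially small slack. No other step is delicate.
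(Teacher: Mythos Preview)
Your argument is correct and is essentially identical to the paper's own proof: you use the same observation that no edge of $H=G[V\setminus M]$ is realized, and the same union bound over the at most $2^n$ induced subgraphs to get the $1-(2/e)^n$ high-probability bound.
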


As mentioned earlier, it is only for the sake of analysis that we do not simply let $M$ be an arbitrary minimum vertex cover of $\mathcal{H}^\star$. 
Instead, we use $M = \mvc{\mathcal{G}_2}$ since it satisfies that 
$\Pr[v\in M] = c_v.$ 
This is due to the fact that $\mathcal{G}_2$ and $\mathcal{G}^\star$ are drawn from the same distribution, i.e., 
 $\mathcal{G}_2$ contains any edge $e$ independently with probability $p_e$.
 Below, we state this observation formally.  

 \begin{observation}\label{obs:cv}
 Let $H$ be a subgraph of $G$ and  $\mathcal{H}^\star$ be its actual realization. Moreover, we define $\mathcal{G}_2$ to be a subgraph of $G$ containing  all the edges  in $\mathcal{H}^\star$ and any edge $e\in G\setminus H$ independently with probability $p_e$.  If $M$ is a minimum vertex cover of  $\mathcal{G}_2$ it satisfies $\Pr[v\in M]= c_v.$
 \end{observation}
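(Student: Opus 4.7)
The plan is to reduce the observation to the single identity $\mathcal{G}_2 \stackrel{d}{=} \mathcal{G}^\star$ in distribution and then appeal to the definition of $c_v$. Once we establish that $\mathcal{G}_2$ and $\mathcal{G}^\star$ have the same law, the random variable $\mathrm{MVC}(\mathcal{G}_2)$ has the same law as $\mathrm{MVC}(\mathcal{G}^\star)=\OPT$ (the $\mathrm{MVC}$ oracle is a fixed, possibly tie-breaking, function of its graph argument; if it uses internal randomness we can assume that randomness is drawn independently of the input graph). In particular, for any vertex $v$,
\[
\Pr[v\in M]=\Pr[v\in \mathrm{MVC}(\mathcal{G}_2)]=\Pr[v\in \mathrm{MVC}(\mathcal{G}^\star)]=\Pr[v\in \OPT]=c_v,
\]
which is exactly the claim.

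The main work is therefore verifying the distributional identity $\mathcal{G}_2\stackrel{d}{=}\mathcal{G}^\star$. I would do this edge-by-edge. Partition $E$ into the edges of $H$ and the edges of $G\setminus H$. By construction, every edge $e\in G\setminus H$ is included in $\mathcal{G}_2$ independently with probability $p_e$, matching exactly the marginal and independence structure that $\mathcal{G}^\star$ assigns to those edges. For edges $e\in H$, an edge is in $\mathcal{G}_2$ if and only if it is in $\mathcal{H}^\star$; since $\mathcal{H}^\star$ is by definition the realization of $H$ obtained by keeping each $e\in H$ independently with probability $p_e$, these edges also have the correct marginals and are mutually independent. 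Finally, the two groups of edge-indicators (those in $H$ and those in $G\setminus H$) are drawn from independent sources of randomness, so the full joint distribution of edge indicators in $\mathcal{G}_2$ agrees with that in $\mathcal{G}^\star$.

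There is essentially no obstacle here beyond being careful about what "same distribution" buys us. One mild subtlety worth flagging explicitly is that $M$ is a deterministic function of the (random) graph $\mathcal{G}_2$ via the $\mathrm{MVC}$ oracle, so the conclusion $\Pr[v\in M]=c_v$ follows purely from the push-forward of equal distributions under the same map; no independence between $M$ and other objects is claimed or needed. This makes the observation essentially a coupling statement: the construction of $\mathcal{G}_2$ is designed precisely so that the conditional distribution of the unqueried edges, given $\mathcal{H}^\star$, agrees with the original edge-generation process, making $\mathcal{G}_2$ a legitimate sample from the same product distribution that defines $\OPT$.
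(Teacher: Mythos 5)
Your proof is correct and takes essentially the same route as the paper: the paper also argues (in the text preceding the observation) that $\mathcal{G}_2$ and $\mathcal{G}^\star$ have the same distribution because $\mathcal{G}_2$ contains every edge $e$ independently with probability $p_e$, and then concludes $\Pr[v\in M]=c_v$. Your write-up just makes explicit the edge-by-edge verification and the push-forward step, which the paper leaves implicit.
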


As a result of this observation and the fact that $P$ also comes from the same distribution as $\OPT$, we get $\E[|P\cup M|]\leq 2\opt$ which implies that Algorithm~\ref{alg0} is a 2-approximation.  However, we claim that depending on the the way $c_v$'s are distributed this algorithm may result in a better than 2 approximation ratio. First of all, observe that $\mathcal{G}_1$ and $\mathcal{G}_2$ are two independent random variables. The only way in which $\mathcal{G}_1$ impacts the construction of $\mathcal{G}_2$ is in determining which of its edges come from the actual realization and which ones are hallucinated. Nonetheless, $\mathcal{G}_2$ contains any edge $e$ with probability $p_e$ independently from other edges and from $\mathcal{G}_1$. This implies that $P=\mvc{\mathcal{G}_1}$ and $M = \mvc{\mathcal{G}_2}$ are also two independent random variables. Therefore, any vertex $v$ joins $P$ and $M$ independently with probability $c_v$ which means 
\begin{align}\label{eq:PM}
    \Pr[v\in P\cup M] = 1-(1-c_v)^2 =  c_v(2-c_v).
\end{align} 
Since $c_v(2-c_v) \approx 2c_v$ for $c_v\rightarrow 0$,
the worst case scenario for this algorithm is when all vertices have very small $c_v$s. On the other hand, if for all the vertices we have $c_v\geq 0.5$ then, $c_v(2-c_v)\leq 1.5c_v$ which results in an approximation ratio of $3/2$ (our desired bound). Having established this, we will next discuss another 2-approximation algorithm with an opposite nature. This algorithm has a better performance if a large portion of the optimal solution comes from vertices with small $c_v$s.

\paragraph{The second $2$-approximate algorithm.} This algorithm (formally stated as Algorithm~\ref{alg0-2}) is not exactly a 2-approximation. Instead, it finds a $(2+O(\epsilon))$-approximate vertex cover using $O(n/\epsilon p)$ queries for any $\epsilon\in (0,0.1).$ In this algorithm, we set $P=\{v\in V: c_v\geq 0.5-\epsilon\}$. The rest of the algorithm follows our standard framework similar to Algorithm~\ref{alg0}.
\vspace{2 mm}
 \begin{tboxalg2e}{A $(2+O(\epsilon))$-approximation stochastic vertex cover algorithm}
\begin{algorithm}[H]
	\DontPrintSemicolon
	\SetAlgoSkip{bigskip}
	\SetAlgoInsideSkip{}
	\label{alg0-2}
	Define $P= \{v\in V :  c_v\geq 0.5-\epsilon\}$.\;
	Let $H$ be the subgraph induced in $G$ by $V\setminus P$.\;
Query subgraph $H$ and let $\mathcal{H}^\star$ be its realization.\;
	Let $\mathcal{G}$ be a subgraph of $G$ containing  all the edges in $\mathcal{H}^\star$ and any edge $e\in G\setminus H$ independently w.p. $p_e$.\;
$M\gets \text{MVC}(\mathcal{G})$\;
	Return $P\cup M$
\end{algorithm}
\end{tboxalg2e}
\vspace{3 mm}

We will first discuss why the above mentioned algorithm queries only $O(n/\epsilon p)$ edges. To put Observation~\ref{claim:nkergkjer} differently, the expected number of edges in $G$ not covered by $\OPT$, or equivalently $\sum_e (1-c_e)$, is $O(n/p)$. Using Markov's inequality, 
this implies that the number of edges with $c_e < 1-\epsilon$ is upper-bounded by $O(n/\epsilon p)$. Since both end-points of any edge $e\in H$ have $c_v<0.5-\epsilon$, all these edges have $c_e \leq 1-2\epsilon$. Hence, there are at most $O(n/\epsilon p)$ of them.

We will next prove that Algorithm~\ref{alg0-2} is  a $(2+O(\epsilon))$-approximation. Clearly, for any vertex $v\in P$ we have $\Pr[v\in P\cup M] = 1$. Moreover, any $v\notin P$ joins $M$ with probability $c_v$ (by Observation~\ref{obs:cv}) which  implies $\Pr[v\in P\cup M] = c_v$. As a result
\begin{align}\label{eq:ilwflk}
    \E[|P\cup M|]= \sum_{v\in P} \Pr[v\in P\cup M]+ \sum_{v\notin P} \Pr[v\in P\cup M] = 
    \E[|P|]+ \sum_{v\notin P} c_v.
\end{align}
Let us define $\alpha$ to be the fraction of the optimal solution not in $P$.
That is  
\begin{align*}
    \alpha = \frac{\sum_{v\notin P} c_v}{\opt}.
\end{align*}
Since, by definition, any vertex $v\in P$ satisfies $c_v\geq 0.5-\epsilon$ we have
$$\E[|P|]\times(0.5-\epsilon) \leq \sum_{v\in P} c_v = \opt - \sum_{v\notin P} c_v =(1-\alpha)\opt,$$ which gives us 
$$\E[|P|]\leq \frac{(1-\alpha)\opt}{0.5-\epsilon}.$$
Combining this with \eqref{eq:ilwflk} gives us

\begin{align}\label{eq:s2}
  \E[|P\cup M|= \E[|P|]+ \sum_{v\notin P} c_v \leq \frac{(1-\alpha)\opt}{0.5-\epsilon}+ \alpha \opt = \opt\frac{1-\alpha(0.5+\epsilon)}{0.5-\epsilon} = \opt(2-\alpha+O(\epsilon)),
\end{align} and implies the $(2+O(\epsilon))$-approximation ratio. Observe that this bound is tight only when $\alpha =0$. If for an instance of the problem, a large number of vertices have $c_v<0.5-\epsilon$ and as a result $\alpha$ is large, this algorithm achieves a better approximation ratio.

\subsection{Beating  $2$-approximation}
As discussed above, Algorithm~\ref{alg0-2} has a better performance when a large portion of $\OPT$ comes from vertices with $c_v < 0.5-\epsilon$ while Algorithm~\ref{alg0} is almost the opposite. Therefore, an idea for beating the 2-approximation ratio is to run the best of these two. We will prove that doing so achieves an approximation ratio of $5/3 + O(\epsilon)$. Given parameter $\epsilon\in (0,0.1)$, let us recall the definition of $\alpha$ as  \begin{align}\label{eq:alpha}
    \alpha = \frac{\sum_{v: c_v< 0.5-\epsilon} c_v}{\opt}.
\end{align} Moreover, let $S_1$ and $S_2$ be the solutions outputted by Algorithm~\ref{alg0} and Algorithm~\ref{alg0-2} respectively. As an upper-bound for $|S_1|$ we have  
\begin{align*}
   \E[|S_1|]&=\sum_{v\in V}\Pr[v\in S_1] \stackrel{\eqref{eq:PM}}{=} \sum_{v\in V}c_v(2-c_v) \leq \sum_{\mathclap{v: c_v\geq 0.5-\epsilon}} (1.5+\epsilon)\cdot c_v + \sum_{\mathclap{v: c_v< 0.5-\epsilon}} 2\cdot c_v \\
    & \stackrel{\eqref{eq:alpha}}{=}  (1.5+\epsilon)  (1- \alpha)\opt  + 2\alpha\opt = \opt(1.5 + 0.5\alpha + O(\epsilon)).
\end{align*}
Moreover, by \eqref{eq:s2} we have 
$$\E[|S_2|]= \opt(2-\alpha+ O(\epsilon))$$
Therefore, the approximation ratio achieved by running the best of these two algorithms is upper-bounded by 
\begin{align*}
    \max\big[(2-\alpha+ O(\epsilon)), (1.5 + 0.5\alpha + O(\epsilon))\big]
\end{align*}
We observe that this term is minimized for $\alpha=1/3$ which results in an approximation ratio of $5/3+O(\epsilon).$ This analysis is tight since both algorithms achieve this approximation ratio when $2/3$ of $\OPT$ comes from vertices with $c_v=0.5$ and the rest from vertices with $c_v\rightarrow 0$.

\section{The $(3/2+\epsilon)$-Approximation Algorithm}

Inspired by the above algorithms, in this section, we design a $3/2$-approximation algorithm. 
{Throughout this section, we assume that $c_v$s are known in advance. We relax this assumption in Section~\ref{section:call} by directly estimating $c_v$s to an arbitrary desirable accuracy with polynomial number of calls to a minimum vertex cover oracle.}

Similar to Algorithm~\ref{alg0} and Algorithm~\ref{alg0-2}, this algorithm first picks a subset of vertices $P$ and commits to including them in the final solution and then queries the edges not covered by them, {i.e., $H = G[V\setminus P]$.} The algorithm first picks a threshold $\tau$ which may vary for different instances. Based on this threshold and $c_v$ of the vertices it 
{commits to including the set $P= P_1\cup P_2$.
For vertices whose $c_v\in [1-\tau - \epsilon, \tau]$, we use the style of Algorithm~\ref{alg0} and only include them in $P_1$ if they also belong to a minimum vertex cover of a hallucinated random subgraph.
For the set of vertices with $c_v > \tau$, we use the style of Algorithm~\ref{alg0-2} and include all of them in $P_2$.}

Consider a solution $S$ for a given instance of the problem. If  we claim that $S$ is an $(3/2+\epsilon)$-approximate solution, we need to show
$$\sum_{v\in V}\Pr[v\in S] \leq (3/2+\epsilon)\cdot\opt =  (3/2+\epsilon)\sum_{v\in V} c_v.$$
In other words, $S$ should satisfy
$$\sum_{v\in V}\big((3/2+\epsilon)\cdot c_v - \Pr[v\in S]\big)\geq 0.$$
Inspired by this, for any vertex $v$, we define the budget of this vertex as
\begin{align}\label{eq:budget}
    b_v= \max\Big((3/2+\epsilon)\cdot c_v - \Pr[v\in S],\, 0\Big)
\end{align}
and its cost as 
\begin{align}\label{eq:cost}
\sigma_v= \max\Big(\Pr[v\in S]- (3/2+\epsilon)\cdot c_v,\, 0\Big)    
\end{align}
Proving that $S$ is an $3/2$-approximate solution is equivalent to showing that we can use the budget of vertices with $b_v>0$, to pay the cost of the vertices with $\sigma_v>0$.  Formally,
\begin{claim}\label{claim:costbudget}
Let $S$ be a vertex cover of $\mathcal{G}^{\star}$. Also, for any vertex $v\in V$, consider $b_v$ and $\sigma_v$ defined respectively in \eqref{eq:budget} and \eqref{eq:cost}. If $S$ satisfies 
$$\sum_{v\in V} b_v - \sum_{v\in V} \sigma_v \geq 0,$$ then 
$\E[|S|]\leq (3/2+\epsilon)\opt.$
\end{claim}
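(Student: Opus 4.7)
The plan is to show the claim via a direct algebraic identity, not by any clever combinatorial argument. The key observation I would start from is that for every single vertex $v$, the definitions of $b_v$ and $\sigma_v$ are complementary: they are the positive and negative parts of the same quantity $(3/2+\epsilon)c_v - \Pr[v\in S]$. So I would first verify, by splitting into the two cases (whether $(3/2+\epsilon)c_v \geq \Pr[v\in S]$ or not), that
\[
b_v - \sigma_v \;=\; (3/2+\epsilon)\, c_v - \Pr[v\in S]
\]
holds pointwise for every $v\in V$.

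Next, I would sum this identity over all $v\in V$. Using linearity of expectation together with the two standard identities $\sum_{v\in V} c_v = \opt$ (from the definition of $c_v$) and $\sum_{v\in V} \Pr[v\in S] = \E[|S|]$ (since $|S| = \sum_v \mathbbm{1}[v\in S]$), this yields
\[
\sum_{v\in V} b_v - \sum_{v\in V} \sigma_v \;=\; (3/2+\epsilon)\,\opt - \E[|S|].
\]
The hypothesis of the claim is precisely that the left-hand side is nonnegative, so rearranging gives $\E[|S|] \leq (3/2+\epsilon)\,\opt$, which is what we wanted.

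The only thing to be slightly careful about is the case split in the first step to make sure the signs match up, but this is routine: if $(3/2+\epsilon)c_v \geq \Pr[v\in S]$ then $b_v$ equals the difference and $\sigma_v=0$, and otherwise $\sigma_v$ equals the negated difference and $b_v=0$; in both cases $b_v - \sigma_v$ reduces to $(3/2+\epsilon)c_v - \Pr[v\in S]$. So I do not anticipate any real obstacle here; the claim is essentially a restatement of the budget/cost decomposition in a form convenient for the later analysis, and the proof is a one-line manipulation once the pointwise identity is noted.
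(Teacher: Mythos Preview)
Your proposal is correct and follows essentially the same approach as the paper: establish the pointwise identity $b_v-\sigma_v=(3/2+\epsilon)c_v-\Pr[v\in S]$, sum over $v$ using $\sum_v c_v=\opt$ and $\sum_v \Pr[v\in S]=\E[|S|]$, and then apply the hypothesis.
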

\begin{proof}
Since for any vertex $v\in V$, we have \begin{align*}
    b_v-\sigma_v &=  \max\Big((1.5+\epsilon)\cdot c_v - \Pr[v\in S],\, 0\Big)- \max\Big(\Pr[v\in S]- (1.5+\epsilon)\cdot c_v,\, 0\Big)\\
    &= (1.5+\epsilon)\cdot c_v - \Pr[v\in S],
\end{align*}
We get 
\begin{align*}
    \sum_{v\in V}  (b_v-\sigma_v)  &= \sum_{v\in V}\Big((1.5+\epsilon)\cdot c_v - \Pr[v\in S]\Big) = \sum_{v\in V}(1.5+\epsilon)\cdot c_v -\sum_{v\in V} \Pr[v\in S]\\
    &= (1.5+\epsilon)\opt - \E[|S|].
\end{align*} 
Therefore, inequality $\sum_{v\in V}  (b_v-\sigma_v)\geq 0$ in the statement of this claim also implies $$(1.5+\epsilon)\opt - \E[|S|]\geq 0,$$ and as a result we have $\E[|S|]\leq (1.5+\epsilon)\opt,$ completing the proof of this claim.
\end{proof}
We observe that if $S$ is found by Algorithm~\ref{alg0}, then vertices with $c_v< (0.5-\epsilon)$ have a positive cost. On the other hand, if $S$ is found by Algorithm~\ref{alg0-2}, all these vertices have a positive budget. Based on this observation, we want a threshold $\tau$ such that:

\begin{itemize}
    \item If Algorithm~\ref{alg0} is run on $\{ v\in V: c_v\in [1-\tau-\epsilon, \tau]\}$, then we can pay the cost of vertices with $c_v< (0.5-\epsilon)$ in this set using the budget of the ones with $c_v> (0.5-\epsilon)$.
    \item If Algorithm~\ref{alg0-2} is run on the rest of the vertices, we can use the budget of the vertices with $c_v< 1-\tau-\epsilon$ to pay the cost of the vertices with $c_v > \tau$.
\end{itemize}
We claim that setting $\tau$ to be the smallest number in $[0.5,1]$ satisfying $$\sum_{\mathclap{v: c_v>\tau}} c_v \leq \,\,\,\, \sum_{\mathclap{v: c_v<1-\tau-\apx}} c_v$$ gives us these properties. However, clearly, we cannot just run two separate algorithms on these two subsets of vertices since there can potentially be a large number of edges between them. Therefore, we need to prove that following this intuition does not force us to query a large number of edges. We formally state our $3/2$-approximate algorithm below as Algorithm~\ref{alg1}. Later, in Section~\ref{sec:analysis}, we prove that for any $\epsilon\in (0,0.1)$ this algorithm outputs a $(3/2+\epsilon)$-approximate vertex cover using only $O(n/\epsilon p)$ queries.

\vspace{3mm}
\begin{tboxalg2e}{Our $3/2$-approximation algorithm.}
\begin{algorithm}[H]
	\DontPrintSemicolon
	\SetAlgoSkip{bigskip}
	\SetAlgoInsideSkip{}
	\label{alg1}
	Let $\mathcal{G}_1$ be a random realization of $G$ containing any edge $e\in G$ independently with probability $p_e$.\;
		 $C\gets \text{MVC}(\mathcal{G}_1).$\;
	 Let $\tau$ be the smallest number in $[0.5,1]$ such that $\sum_{v: c_v>\tau} c_v \leq \sum_{v: c_v<1-\tau-\apx} c_v$.\;
	$P\gets\{v\in V: c_v>\tau\} \cup \{v\in V: c_v\in[1-\tau-\apx, \tau] \text{ and } v\in C\}$.\;
	Let $H$ be the subgraph induced in $G$ by $V\setminus P$.\;
	Query edges in $H$ and let $\mathcal{H}^\star$ be its realization\;
	\Return $P\cup \text{MVC}(\mathcal{H}^\star)$.
\end{algorithm}
\end{tboxalg2e}
\vspace{3 mm}

First, since this algorithm follows our standard framework, by Observation~\ref{obs:itisavertexcover} it outputs a vertex cover of $\mathcal{G}^\star$. Note that in this algorithm, any vertex in set  $\{ v\in V: c_v\in [1-\tau-\epsilon, \tau]\}$ joins set $P$ iff it is in vertex cover $C$, This is similar to the way Algorithm~\ref{alg0} constructs $P$. Therefore for any vertex $v$ in this set, the probability of $v$ joining $P$ in Algorithm~\ref{alg1} is the same as that of Algorithm~\ref{alg0}. On the other hand, from the rest of the vertices, i.e., $\{ v\in V: c_v < 1-\tau-\epsilon \text{ or } \tau<c_v]$, $P$ includes any vertex with $\tau<c_v$. These are the only vertices in this set with $c_v\geq 0.5-\epsilon$. Therefore, this is similar to the way Algorithm~\ref{alg0-2} constructs set $P$.

\section{The Analysis}\label{sec:analysis}
In the following lemma we prove that the number of edges queried by our algorithm is $O(n/\epsilon p).$ 
\begin{lemma}\label{lemma:query}
Subgraph $H$ from Algorithm~\ref{alg1} satisfies $\E[|H|]=O(\frac{n}{\apx p}).$
\end{lemma}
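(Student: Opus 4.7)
The plan is to decompose the edges of $H$ into two types according to why each endpoint is in $V\setminus P$, and bound each type separately using the two tools already available: the deterministic bound $\sum_{e\in E}(1-c_e)=O(n/p)$ (which follows from Observation~\ref{claim:nkergkjer} applied to $\OPT$) and Observation~\ref{claim:nkergkjer} itself applied to the auxiliary realization $\mathcal{G}_1$.

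Concretely, partition the vertices into $L=\{v:c_v<1-\tau-\epsilon\}$, $M'=\{v:c_v\in[1-\tau-\epsilon,\tau]\}$, and $R=\{v:c_v>\tau\}$. By construction of $P$ in Algorithm~\ref{alg1}, $R\subseteq P$ and $M'\cap P=M'\cap C$, so $V\setminus P=L\cup(M'\setminus C)$. I would then split the edges of $H$ into two classes:
\begin{itemize}
    \item Class A: edges with at least one endpoint in $L$.
    \item Class B: edges with both endpoints in $M'\setminus C$.
\end{itemize}
Every edge of $H$ falls into at least one of these classes, because its two endpoints are either both in $M'\setminus C$ (Class B) or at least one lies in $L$ (Class A).

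For Class A, an edge $e=(u,v)\in H$ with $u\in L$ has $c_u<1-\tau-\epsilon$, and the other endpoint satisfies $c_v\leq\tau$ since $v\in V\setminus P\subseteq L\cup M'$. Using the union bound $c_e\leq c_u+c_v$, this yields $c_e<1-\epsilon$, equivalently $1-c_e>\epsilon$. Since $\sum_{e\in E}(1-c_e)=\E[\#\text{edges uncovered by }\OPT]=O(n/p)$ by Observation~\ref{claim:nkergkjer} applied to a fresh realization, a Markov-type counting argument gives $|\text{Class A}|\leq\epsilon^{-1}\sum_{e\in E}(1-c_e)=O(n/(\epsilon p))$. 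This bound is deterministic, so it also bounds the expectation.

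For Class B, both endpoints are absent from $C=\mvc{\mathcal{G}_1}$, so such an edge cannot be in $\mathcal{G}_1$ (else $C$ would not cover it). Hence Class B is contained in the set of edges of $G$ not covered by $C$, and Observation~\ref{claim:nkergkjer} applied to $\mathcal{G}_1$ and $C$ gives $\E[|\text{Class B}|]=O(n/p)$. Summing, $\E[|H|]\leq\E[|\text{Class A}|]+\E[|\text{Class B}|]=O(n/(\epsilon p))$, as required.

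The only subtle point is making sure every edge of $H$ is captured by the decomposition and that the bound $c_e\leq c_u+c_v$ actually gives $c_e\leq 1-\epsilon$ using the definitions of $L$ and $M'$ together with $\tau\geq 1/2$; I expect this to be the main place where one has to be careful, although it is ultimately straightforward.
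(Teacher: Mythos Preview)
Your proposal is correct and follows essentially the same approach as the paper's proof: the paper's two cases ($c_u<1-\tau-\epsilon$ versus $c_u\in[1-\tau-\epsilon,\tau]$ for the smaller endpoint) are exactly your Class~A and Class~B, and both are bounded using Observation~\ref{claim:nkergkjer} and the Markov-style counting argument in the same way. Your explicit vertex partition $L\cup M'\cup R$ is just a slightly more structured presentation of the paper's direct case split on the edge endpoints.
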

\begin{proof}
Consider an edge $e=(u,v)$. We will first show that if $e\in H$, then either $c_e\leq 1-\apx$ or it is not covered in $C$.  Assume w.l.o.g. that $c_v\geq c_u$. Since $e\in H$, we know that $v\notin P$ and $u\notin P$ which implies $c_v\leq \tau$ since otherwise $v$ joins $P$. We will prove our claim by considering all possible values of $c_u$.
\begin{itemize}
\item $c_u\geq 1-\tau-\apx$: Since we know $c_v\geq c_u$ and $c_v\leq \tau$, in this case both $c_u$ and $c_v$ are in $[1-\tau-\apx, \tau]$. Thus, $e$ joins $H$ iff both its end-points are not in $C$ which means $e$ is not covered in $C$.
\item $c_u< 1-\tau-\apx$: Since we know $c_v\leq \tau$, this implies $c_v+c_u\leq 1-\apx$. Moreover, since for any edge $c_v+c_u\geq c_e$ holds, we get $c_e\leq 1-\apx$.
\end{itemize}
As a result, we have 
$$|H|\leq |\{e: c_e\leq 1-\apx\}| + |\{e: e \text{ not covered by } C\}|.$$
Observation~\ref{claim:nkergkjer}, states that the number of edges not covered by an MVC of a random realization of $G$ is $O(n/p)$. This directly implies 
$$\E[|\{e: e \text{ not covered by } C\}|] = O(n/p).$$
Since $\OPT$ itself is an MVC of a random realization of $G$,  Observation~\ref{claim:nkergkjer} also implies
$$\E |\{e: e \text{ not covered by } \OPT\}|]= \sum_e (1-c_e)= O(n/p).$$
Using Markov's inequality,  this gives us  $$\E[|\{e: c_e\leq 1-\apx\}|] = \E[|\{e: 1-c_e > \apx\}|]\leq \left(\sum_{e\in G} (1-c_e)\right)/\apx = O(n/\apx p).$$
Putting these inequalities together, we conclude 
$$\E[|H|]\leq \E[|\{e: c_e\leq 1-\apx\}| + |\{e: e \text{ not covered by } C\}|] = O(n/\apx p),$$
completing the proof of this claim.	
\end{proof}

As we mentioned before, for the sake of analysis, we need for the vertex cover of $\mathcal{H}^\star$ to include any vertex $v$ with probability $c_v$. In order to achieve this, we need to make a slight change to the algorithm. We explain the modified algorithm below.

\vspace{2 mm}
\begin{tboxalg2e}{An algorithm used only for analysis.}
\begin{algorithm}[H]
    \SetKwFunction{SVC}{SVC}
	\DontPrintSemicolon
	\SetAlgoSkip{bigskip}
	\SetAlgoInsideSkip{}
	\label{alg3}
	\SetKwProg{Fn}{Function}{:}{}
 	\DontPrintSemicolon
Consider subgraph $H$ and subset of vertices $P$ from Algorithm~\ref{alg1}\; 
	Query subgroup $H$ and let $\mathcal{H}^\star$ be its realization.\;
	Let $\bar{H} \gets G\setminus H$\;
	Let $\bar{\mathcal{H}}$ be a random realization of $\bar{H}$ containing each of its edges $e$ independently with probability $p_e$.\;
	$M \gets \text{MVC}(\mathcal{H}^\star \cup \bar{\mathcal{H}})$.\;
   	\Return $M\cup P$.
\end{algorithm}
\end{tboxalg2e}
\vspace{3 mm}

In the rest of the paper we will prove our desired approximation ratio for Algorithm~\ref{alg3} instead of Algorithm~\ref{alg1}. However, for that to work we first need the following observation.

\begin{observation}\label{obs:alg1alg3}
Let $S_1$ and $S_2$ be respectively the outputs of Algorithm~\ref{alg1} and Algorithm~\ref{alg3}. We have $\E[|S_1|]\leq \E[|S_2|]$.
\end{observation}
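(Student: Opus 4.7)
The plan is to couple the two algorithms on a common probability space so that the inequality holds pointwise, and then take expectations. Concretely, I would use the same random realization $\mathcal{G}_1$ (which produces the same threshold-based set $P$ and induced subgraph $H$) and the same queried realization $\mathcal{H}^\star$ of $H$ for both executions; the only extra randomness in Algorithm~\ref{alg3} is the independently hallucinated $\bar{\mathcal{H}}$ on $\bar H = G \setminus H$. Under this coupling, the goal is to show $|S_1| \le |S_2|$ deterministically.

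First I would unpack the sets. Algorithm~\ref{alg1} returns $S_1 = P \cup M_1$ where $M_1 = \mathrm{MVC}(\mathcal{H}^\star)$, and since every vertex appearing in $\mathcal{H}^\star$ lies in $V\setminus P$ (because $\mathcal{H}^\star \subseteq H = G[V\setminus P]$), we have $M_1 \subseteq V\setminus P$ and hence $|S_1| = |P| + |M_1|$. Algorithm~\ref{alg3} returns $S_2 = P \cup M_2$ where $M_2 = \mathrm{MVC}(\mathcal{H}^\star \cup \bar{\mathcal{H}})$, so $|S_2| = |P| + |M_2 \setminus P|$.

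The key step is to argue $|M_1| \le |M_2 \setminus P|$. Let $M_2' = M_2 \cap (V\setminus P) = M_2 \setminus P$. Because every edge of $\mathcal{H}^\star$ has both endpoints in $V\setminus P$, and because $M_2$ is a vertex cover of $\mathcal{H}^\star \cup \bar{\mathcal{H}} \supseteq \mathcal{H}^\star$, for each such edge the endpoint of $M_2$ that covers it necessarily lies in $V\setminus P$ and therefore in $M_2'$. Hence $M_2'$ is a vertex cover of $\mathcal{H}^\star$, and since $M_1$ is the minimum vertex cover of $\mathcal{H}^\star$, we get $|M_1| \le |M_2'| = |M_2 \setminus P|$. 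Combining gives $|S_1| \le |S_2|$ pointwise under the coupling, and taking expectations yields $\E[|S_1|] \le \E[|S_2|]$.

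I do not expect any real obstacles here; the argument is essentially bookkeeping, and the only subtlety is making sure the coupling is explicit so that $P$, $H$, and $\mathcal{H}^\star$ are literally identical across the two algorithms, allowing us to compare $M_1$ against the restriction of $M_2$ to $V\setminus P$.
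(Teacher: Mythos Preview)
Your proposal is correct and follows essentially the same approach as the paper: both observe that $S_1$ and $S_2$ share the same $P$, and that the part of $S_2$ outside $P$ is still a vertex cover of $\mathcal{H}^\star$, so the minimum vertex cover $M_1$ used by Algorithm~\ref{alg1} can only be smaller. Your write-up is simply more explicit than the paper's two-line proof, spelling out the coupling and the restriction $M_2' = M_2 \cap (V\setminus P)$ carefully.
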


\begin{proof}
Note that both $S_1$ and $S_2$ contain $P$ and a vertex cover of $\mathcal{H}^\star$. Since the vertex cover of $\mathcal{H}^\star$ in Algorithm~\ref{alg1} is the smallest one, the output of this algorithm is not larger than that of Algorithm~\ref{alg3}.
\end{proof}
We are now ready to prove the main lemma about the size of the solution outputted by Algorithm~\ref{alg3}. The lemma is sated below.
\begin{lemma}\label{lemma:approxratio}
Let $S$ be the the output of Algorithm~\ref{alg3}. We have $$E[|S|]\leq (3/2+\epsilon)\opt.$$
\end{lemma}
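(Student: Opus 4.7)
The plan is to compute $\Pr[v\in S]$ exactly for every $v$ and invoke Claim~\ref{claim:costbudget}, reducing the lemma to the scalar inequality $\sum_v \psi(c_v)\ge 0$ with $\psi(c_v):=(3/2+\epsilon)c_v-\Pr[v\in S]$, which I would then verify by splitting $V$ into the partition $T_1\cup T_2\cup T_3$ used in the algorithm and handling the outer ($T_1,T_3$) and middle ($T_2$) pieces separately. The first step is essentially bookkeeping: since $\mathcal{H}^\star\cup\bar{\mathcal{H}}$ has (conditionally on $\mathcal{G}_1$) the same distribution as $\mathcal{G}^\star$, $M$ is independent of $\mathcal{G}_1$ and hence of $P$, and by Observation~\ref{obs:cv} satisfies $\Pr[v\in M]=c_v$; this yields
\[
\Pr[v\in S]=\begin{cases} 1 & v\in T_1,\\ 1-(1-c_v)^2=c_v(2-c_v) & v\in T_2,\\ c_v & v\in T_3.\end{cases}
\]

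For the outer contribution, set $A=\sum_{T_1}c_v$ and $B=\sum_{T_3}c_v$. The defining inequality of $\tau$ gives $A\le B$, so the outer sum $(3/2+\epsilon)A-|T_1|+(1/2+\epsilon)B$ is at least $(2+2\epsilon)A-|T_1|$; combined with the bound $|T_1|<A/\tau$ that follows from $c_v>\tau\ge 1/2$ on $T_1$, the outer contribution is at least $A\bigl((2+2\epsilon)-1/\tau\bigr)\ge 0$ (with the trivial fallback $(1/2+\epsilon)B\ge 0$ when $T_1$ is empty).

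The main obstacle is the middle contribution $\sum_{T_2}c_v(c_v-1/2+\epsilon)$. When $\tau=1/2$ every $v\in T_2$ has $c_v\ge 1/2-\epsilon$ and each summand is non-negative, so the interesting case is $\tau>1/2$. Here I plan to exploit the minimality of $\tau$: for every $\tau'\in[1/2,\tau)$ the defining inequality of $\tau$ fails, giving $\sum_{c_v>\tau'}c_v>\sum_{c_v<1-\tau'-\epsilon}c_v$, and subtracting $A\le B$ yields the family of inequalities
\[
\sum_{c_v\in(\tau',\tau]}c_v\;\ge\;\sum_{c_v\in[1-\tau-\epsilon,\,1-\tau'-\epsilon)}c_v \qquad\text{for all }\tau'\in[1/2,\tau).
\]
The plan is to integrate this pointwise inequality over $\tau'\in[1/2,\tau]$; swapping the order of summation and integration gives, by direct Lebesgue bookkeeping of each $v$'s contribution, the averaged inequality
\[
\sum_{c_v\in(1/2,\tau]}c_v(c_v-1/2)\;\ge\;\sum_{T_2^-}c_v(1/2-\epsilon-c_v),\qquad T_2^-:=T_2\cap\{c_v<1/2-\epsilon\}.
\]
Since vertices of $T_2$ with $c_v\in[1/2-\epsilon,1/2]$ contribute non-negatively to $\sum_{T_2} c_v(c_v-1/2+\epsilon)$ and $c_v(c_v-1/2+\epsilon)\ge c_v(c_v-1/2)$ for $c_v\ge 0$, the middle contribution is at least $\sum_{c_v\in(1/2,\tau]}c_v(c_v-1/2)-\sum_{T_2^-} c_v(1/2-\epsilon-c_v)\ge 0$ by the averaged inequality above.
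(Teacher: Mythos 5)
Your proposal is correct, and the interesting part is that you prove what the paper calls Lemma~\ref{lemma:S3} by a genuinely different method. You both start the same way (Claim~\ref{claim:costbudget} plus the exact formulas $\Pr[v\in S]=1$, $c_v(2-c_v)$, $c_v$ on the three $\tau$-ranges, and the $A\le B$, $|T_1|<A/\tau$ bound for the outer bands — this is essentially the paper's Lemma~\ref{lemma:S2}). The paper then isolates the band $c_v\in[1/2-\epsilon,1/2]$ as cost-free (Lemma~\ref{lemma:S1}) and handles the remaining middle vertices by an explicit inductive charging scheme: it sorts the negative-budget vertices $u_1,u_2,\dots$ by $c_u$, matches each prefix $B_i$ against $A_i=\{v:c_v\ge 1-c_{u_i}-\epsilon\}$, and proves by induction on $i$ a strengthened invariant involving the marginal rate $\sigma_{u_i}/c_{u_i}$ (Claims~\ref{claim:bjfjhbr} and~\ref{claim:krjkf}). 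You instead observe that minimality of $\tau$ gives the whole one-parameter family $\sum_{c_v>\tau'}c_v > \sum_{c_v<1-\tau'-\epsilon}c_v$ for every $\tau'\in[1/2,\tau)$, and integrating this family over $\tau'$ produces the quadratic inequality $\sum_{c_v\in(1/2,\tau]}c_v(c_v-\tfrac12)\ge\sum_{T_2^-}c_v(\tfrac12-\epsilon-c_v)$ in one shot; the Fubini swap is legitimate since everything is a finite sum of step functions and the inequality is non-strict at the endpoint $\tau'=\tau$. Your averaging argument is shorter and avoids the discrete bookkeeping of the induction, at the cost of needing the family of threshold inequalities rather than just the single one at $\tau$; the paper's induction is more hands-on but may be easier to adapt when $c_v$ is only known approximately (which the paper needs in Section~\ref{section:call}). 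Both arguments crucially rest on the same two facts — independence of $M$ from $P$ on the middle band, and the minimality of $\tau$ — so the conceptual content is the same.
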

\begin{proof}
By Claim~\ref{claim:costbudget}, to prove this lemma it suffices to show $$\sum_{v\in V} b_v - \sum_{v\in V} \sigma_v \geq 0$$ where $b_v$ and $\sigma_v$, the budget and cost of vertex $v$ are respectively defined in \eqref{eq:budget} and \eqref{eq:cost}.
To prove this, we divide the vertices of our graph $V$ to three disjoint subsets $V_1$, $V_2$, and $V_3$ and prove this equation for them separately. That is, for any $V_i$ we prove 
\begin{align}\sum_{v\in V_i} b_v - \sum_{v\in V_i} \sigma_v \geq 0.\label{eq:nkwefjw} \end{align}
 We define these subsets as follows (visialized in Figure~\ref{fig:charging}):
 \begin{itemize}
 \item $V_1=\{v: 0.5-\apx\leq c_v\leq 0.5\}.$	
 \item $V_2=\{v: c_v> \tau\} \cup \{v: c_v<1-\tau-\apx\}.$
 \item $V_3= \{v: 0.5< c_v \leq \tau\}  \cup \{v: 1-\tau-\apx\leq c_v< 0.5-\apx\}.$
 \end{itemize}
    We will prove Equation~\eqref{eq:nkwefjw} for subsets $V_1$, $V_2$ and $V_3$ respectively in Lemma~\ref{lemma:S1}, Lemma~\ref{lemma:S2}, and Lemma~\ref{lemma:S3}. 

Since these three subsets are disjoint and satisfy $V=V_1\cup V_2\cup V_3$, we get
\begin{align}\label{eq:kjfjewjfnjer}
    \sum_{v\in V} b_v - \sum_{v\in V} \sigma_v = \sum_{i} \left(\sum_{v\in V_i} b_v - \sum_{v\in V_i} \sigma_v\right)\geq 0
\end{align} 
completing the proof of this lemma.
\end{proof}

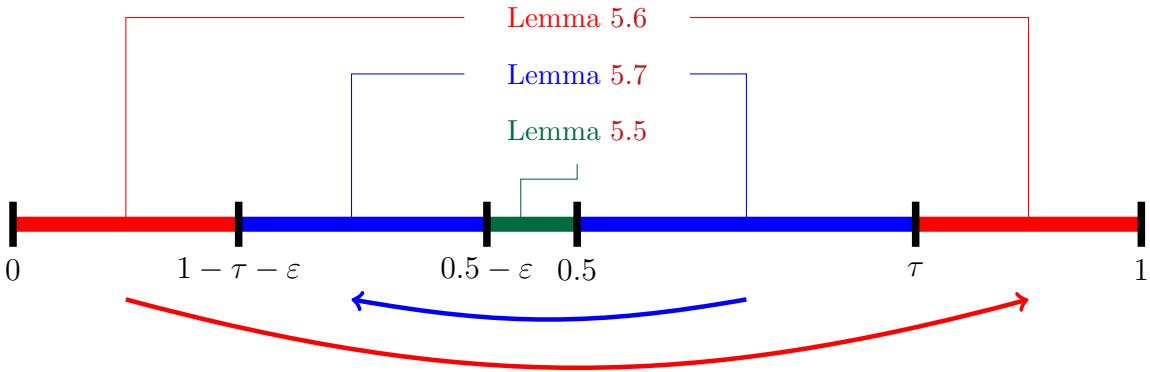
\begin{figure}[h]
    \centering
    \begin{tikzpicture}
\definecolor{cadmium}{rgb}{0.0, 0.42, 0.24}

\draw[color=red, line width = 2mm] (0,0) -- (3, 0);
\draw[color=blue, line width = 2mm] (3,0) -- (6.3, 0);
\draw[color=cadmium, line width = 2mm] (6.3,0) -- (7.5, 0);
\draw[color=blue, line width = 2mm] (7.5,0) -- (12, 0);
\draw[color=red, line width = 2mm] (12,0) -- (15, 0);

\draw[line width = 1mm] (0, 0.3) -- (0,-0.3);
\draw[line width = 1mm] (3, 0.3) -- (3,-0.3);
\draw[line width = 1mm] (6.3, 0.3) -- (6.3,-0.3);
\draw[line width = 1mm] (7.5, 0.3) -- (7.5,-0.3);
\draw[line width = 1mm] (12, 0.3) -- (12,-0.3);
\draw[line width = 1mm] (15, 0.3) -- (15,-0.3);

\node at (0, -0.6) {\large $0$};
\node at (3, -0.6) {\large $1 - \tau - \epsilon$};
\node at (6.3, -0.6) {\large $0.5 - \epsilon$ };
\node at (7.5, -0.6) {\large $0.5$};
\node at (12, -0.6) {\large $\tau$};
\node at (15, -0.6) {\large $1$};

\path[->, color = red, line width = 0.6mm] (1.5, -1) edge[bend right=15] (13.5, -1);

\path[->, color = blue, line width = 0.6mm] (9.75, -1) edge[bend left=10] (4.5, -1);

\draw[cadmium] (6.75,0) -- (6.75, 0.6) -- (7.5, 0.6) -- (7.5, 0.8); 
\node[cadmium] at (7.5, 1.25) {Lemma \ref{lemma:S1}};

\draw[blue] (4.5,0) -- (4.5, 2) -- (6, 2); \draw[blue] (9.75,0) -- (9.75, 2) -- (9, 2);
\node[blue] at (7.5, 2) {Lemma \ref{lemma:S3}};

\draw[red] (1.5,0) -- (1.5, 2.75) -- (6, 2.75); \draw[red] (13.5,0) -- (13.5, 2.75) -- (9, 2.75);
\node[red] at (7.5, 2.75) {Lemma \ref{lemma:S2}};
\end{tikzpicture}
    \caption{\textbf{Managing vertex costs for different values of $c_v$.} For set $V_1$ (the green area), we prove in Lemma~\ref{lemma:S1} that the vertices in this set have no cost. For set $V_2$ (the red area), in Lemma~\ref{lemma:S2} we use the budget of the vertices in $\{v: c_v<1-\tau-\apx\}$ to pay the cost of the vertices in $\{v: c_v> \tau\}$. Finally, for set $V_3$ (the blue area), in Lemma~\ref{lemma:S3} we use the budget of vertices in  $\{v: 0.5< c_v \leq \tau\}$ to pay the cost of the vertices in $\{v: 1-\tau-\apx\leq c_v< 0.5-\apx\}$.
    }
    \label{fig:charging}
\end{figure}

Before stating the three aforementioned lemmas formally, we need the following claim which we will use to prove them.

\begin{claim}\label{claim:uieyuewf}
Consider $\tau$ defined in Algorithm~\ref{alg1}, and let $S$ be the output of Algorithm~\ref{alg3}. For any vertex $v$ with $c_v\in[1-\tau-\apx, \tau]$, we have $\Pr[v\in S]=c_v(2-c_v).$
\end{claim}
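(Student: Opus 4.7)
The plan is to show that for $v$ with $c_v \in [1-\tau-\epsilon,\tau]$, the events $\{v\in P\}$ and $\{v\in M\}$ are each of probability $c_v$ and are mutually independent, so that inclusion-exclusion yields $\Pr[v\in S]=1-(1-c_v)^2=c_v(2-c_v)$.

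First I would inspect the construction of $P$ in Algorithm~\ref{alg1}. For $v$ with $c_v\in[1-\tau-\epsilon,\tau]$, the branch $\{v:c_v>\tau\}$ does not apply, so $v\in P$ iff $v\in C=\text{MVC}(\mathcal{G}_1)$. Since $\mathcal{G}_1$ is drawn from the same distribution as $\mathcal{G}^\star$, we have $\Pr[v\in P]=\Pr[v\in C]=c_v$. Second, I would apply Observation~\ref{obs:cv} to the pair $(H,\mathcal{G}_2)$ with $\mathcal{G}_2=\mathcal{H}^\star\cup\bar{\mathcal H}$ used in Algorithm~\ref{alg3}; this gives $\Pr[v\in M]=c_v$.

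The main step, and the subtle one, is the independence of these two events. Here the point is that $H$ (and hence $P$) is a function of $\mathcal{G}_1$, so one might worry that $M$ and $P$ are coupled through $H$. The resolution is exactly the reason the hallucination $\bar{\mathcal H}$ is introduced: I would argue that conditional on $\mathcal{G}_1$ (equivalently, on $H$ and $P$), each edge $e\in G$ appears in $\mathcal{H}^\star\cup\bar{\mathcal H}$ independently with probability $p_e$. Indeed, for $e\in H$ this follows from $\mathcal{G}^\star$ being independent of $\mathcal{G}_1$ and realizing $e$ with probability $p_e$, and for $e\in G\setminus H$ this follows from $\bar{\mathcal H}$ being a fresh independent realization with the same edge probabilities. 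Since this conditional distribution does not depend on $\mathcal{G}_1$, the random graph $\mathcal{H}^\star\cup\bar{\mathcal H}$ is independent of $\mathcal{G}_1$, and therefore $M=\text{MVC}(\mathcal{H}^\star\cup\bar{\mathcal H})$ is independent of $C=\text{MVC}(\mathcal{G}_1)$.

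Once independence is in hand, the claim is immediate: since the event $\{v\in P\}$ is determined by $C$ and the event $\{v\in M\}$ is determined by $\mathcal{H}^\star\cup\bar{\mathcal H}$, these events are independent, each with probability $c_v$, and so
\[
\Pr[v\in S]=\Pr[v\in P\cup M]=1-\pa{1-c_v}\pa{1-c_v}=c_v(2-c_v),
\]
as claimed. The only real obstacle is rigorously justifying the independence, which as described hinges on a clean conditioning argument showing that $\mathcal{H}^\star\cup\bar{\mathcal H}$ has the same (random realization of $G$) distribution regardless of how $H$ is chosen as a function of $\mathcal{G}_1$.
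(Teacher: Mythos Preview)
Your proposal is correct and follows essentially the same approach as the paper: show $\Pr[v\in P]=\Pr[v\in C]=c_v$, show $\Pr[v\in M]=c_v$, argue that $C$ and $M$ are minimum vertex covers of two independent realizations of $G$, and conclude via inclusion--exclusion. If anything, you give a more careful justification of the independence step (via the conditioning argument on $\mathcal{G}_1$) than the paper, which simply asserts it.
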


\begin{proof}
Consider $M$ and $P$ from Algorithm~\ref{alg3}. Recall that the algorithm outputs $M\cup P$. That is $S=M\cup P$. Set $P$ itself is defined in Algorithm~\ref{alg1} as	
$$P=\{v\in V: c_v>\tau\} \cup \{v\in V: c_v\in[1-\tau-\apx, \tau] \text{ and } v\in C\},$$ 
where $C$ is a minimum vertex cover of a random realization of $G$. This implies that for any $v$ with $c_v\in[1-\tau-\apx, \tau]$, we have 
$$\Pr[v\in S] = \Pr[v\in C\cup M].$$
 Since $M$ and $C$ are minimum vertex covers of two independent realizations of $G$, we get 
$$\Pr[v\in M] = \Pr[v\in C\cup M] = \Pr[v\in C]+ \Pr[v\in M]- \Pr[v\in M]\times \Pr[v\in C] = 2c_v-c_v^2.$$
This concludes the proof of this claim. 	
\end{proof}

\begin{lemma}\label{lemma:S1}
Consider $\tau$ from Algorithm~\ref{alg1}, and define $V_1=\{v: 0.5-\apx\leq c_v\leq 0.5\}$. We have $$\sum_{v\in V_1} b_v - \sum_{v\in V_1} \sigma_v \geq 0,$$ where $b_v$ and $\sigma_v$, the budget and cost of vertex $v$ are  defined in \eqref{eq:budget} and \eqref{eq:cost} with respect to the solution $S$ outputted by Algorithm~\ref{alg3}.
\end{lemma}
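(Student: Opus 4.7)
The plan is to show that every vertex in $V_1$ has zero cost, $\sigma_v=0$, from which the desired inequality follows immediately because $b_v\geq 0$ always by definition. Thus the lemma requires no ``charging'' across vertices at all — it is really a pointwise statement on $V_1$.

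First I would check that Claim~\ref{claim:uieyuewf} applies to every $v\in V_1$. By construction $\tau\in[0.5,1]$, so $1-\tau-\epsilon\leq 0.5-\epsilon$ and $0.5\leq \tau$. Therefore the interval $[0.5-\epsilon,\,0.5]$ defining $V_1$ is contained in $[1-\tau-\epsilon,\tau]$, and Claim~\ref{claim:uieyuewf} gives
$$\Pr[v\in S]\;=\;c_v(2-c_v)\qquad\text{for every } v\in V_1.$$

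Second I would just compare $\Pr[v\in S]$ with $(3/2+\epsilon)c_v$ using the identity
$$(3/2+\epsilon)\,c_v \;-\; c_v(2-c_v) \;=\; c_v\bigl(c_v-(1/2-\epsilon)\bigr).$$
The defining condition of $V_1$ is precisely $c_v\geq 1/2-\epsilon$, so the right-hand side is non-negative for every $v\in V_1$. Hence $\Pr[v\in S]\leq (3/2+\epsilon)c_v$, which by \eqref{eq:cost} means $\sigma_v=0$.

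Summing over $V_1$ and using $b_v\geq 0$ then yields $\sum_{v\in V_1}b_v-\sum_{v\in V_1}\sigma_v\geq 0$, as required. There is no genuine obstacle here; this lemma is essentially a sanity check confirming that the threshold $1/2-\epsilon$ built into $V_1$ is calibrated exactly so that the pointwise bound $c_v(2-c_v)\leq (3/2+\epsilon)c_v$ holds on the range, isolating the only regime in which Algorithm~\ref{alg0}-style sampling alone already achieves the target ratio.
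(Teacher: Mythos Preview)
Your proposal is correct and follows essentially the same approach as the paper: apply Claim~\ref{claim:uieyuewf} to get $\Pr[v\in S]=c_v(2-c_v)$, then use $c_v\geq 0.5-\epsilon$ to deduce $\Pr[v\in S]\leq(3/2+\epsilon)c_v$ and hence $\sigma_v=0$. Your explicit verification that $V_1\subseteq[1-\tau-\epsilon,\tau]$ (so that Claim~\ref{claim:uieyuewf} indeed applies) is a detail the paper leaves implicit, and your factorization $(3/2+\epsilon)c_v-c_v(2-c_v)=c_v\bigl(c_v-(1/2-\epsilon)\bigr)$ is a cleaner way to see the sign than the paper's bound $2-c_v\leq 1.5+\epsilon$, but the argument is the same.
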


\begin{proof}
By Claim~\ref{claim:uieyuewf}, for any vertex $v\in V_1$, we have $\Pr[v\in S]=c_v(2-c_v).$ Combining this with the fact that any vertex in $V_1$ satisfies $c_v>0.5-\apx$, we get 
$$\Pr[v\in S] \leq c_v(2-0.5+\apx) = c_v(1.5+\apx).$$
This means that for any vertex $v\in V_1$, we have 
$b_v = (1.5+\epsilon)c_v  - \Pr[v\in S] \geq 0$ and   $\sigma_v=0,$ and as a result 
$$\sum_{v\in V_1} b_v - \sum_{v\in V_1} \sigma_v \geq 0,$$ completing the proof of this claim.
\end{proof}

\begin{lemma}\label{lemma:S2}
Consider $\tau$ defined in Algorithm~\ref{alg1}, and let
$V_2=\{v: c_v> \tau\} \cup \{v: c_v<1-\tau-\apx\}.$
This set satisfies $$\sum_{v\in V_2} b_v - \sum_{v\in V_2} \sigma_v \geq 0,$$ where $b_v$ and $\sigma_v$, the budget and cost of vertex $v$ are  defined in \eqref{eq:budget} and \eqref{eq:cost} with respect to the solution $S$ outputted by Algorithm~\ref{alg3}.
\end{lemma}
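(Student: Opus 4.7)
The plan is to split $V_2$ into its two disjoint pieces $V_2^{>}=\{v : c_v > \tau\}$ and $V_2^{<}=\{v : c_v < 1-\tau-\epsilon\}$, compute $\Pr[v\in S]$ exactly on each piece, and then balance the contributions using the defining inequality of $\tau$ from Algorithm~\ref{alg1}, namely $\sum_{v\in V_2^{>}} c_v \leq \sum_{v\in V_2^{<}} c_v$.

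First I would pin down $\Pr[v\in S]$ on each piece. Every $v\in V_2^{>}$ is placed into $P$ deterministically by the construction in Algorithm~\ref{alg1}, hence $\Pr[v\in S]=1$. Every $v\in V_2^{<}$ fails both membership conditions of $P$ because $c_v<1-\tau-\epsilon<\tau$, so $\Pr[v\in S]=\Pr[v\in M]$; since the graph $\mathcal{H}^\star\cup\bar{\mathcal{H}}$ on which $M$ is computed in Algorithm~\ref{alg3} has the same distribution as $\mathcal{G}^\star$, Observation~\ref{obs:cv} yields $\Pr[v\in M]=c_v$.

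Next I would plug these into the identity $b_v-\sigma_v=(3/2+\epsilon)c_v-\Pr[v\in S]$ extracted from the proof of Claim~\ref{claim:costbudget}. This gives $b_v-\sigma_v=(3/2+\epsilon)c_v-1$ on $V_2^{>}$ and $b_v-\sigma_v=(1/2+\epsilon)c_v$ on $V_2^{<}$. The only subtle step, and what I expect to be the main obstacle, is controlling the extra $-1$ per vertex accumulated by $V_2^{>}$; this is handled by the counting bound $|V_2^{>}|\leq \frac{1}{\tau}\sum_{v\in V_2^{>}} c_v\leq 2\sum_{v\in V_2^{>}} c_v$, which holds precisely because every such $v$ satisfies $c_v>\tau\geq 1/2$. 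Substituting, one gets $\sum_{v\in V_2^{>}}(b_v-\sigma_v)\geq -(1/2-\epsilon)\sum_{v\in V_2^{>}} c_v$.

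Finally, combining the two sums and invoking the $\tau$-balancing inequality, I expect to reach
\[
\sum_{v\in V_2}(b_v-\sigma_v)\;\geq\;(1/2+\epsilon)\!\!\sum_{v\in V_2^{<}}\!\!c_v \;-\; (1/2-\epsilon)\!\!\sum_{v\in V_2^{>}}\!\!c_v \;\geq\; 2\epsilon\!\!\sum_{v\in V_2^{>}}\!\!c_v \;\geq\;0,
\]
which is exactly the desired inequality. Conceptually, the reason the argument closes is that $V_2^{>}$ vertices can each incur a unit cost, but $\tau\geq 1/2$ forces them to also carry a proportionally large $c_v$-mass, which the slightly larger budget rate $(1/2+\epsilon)$ on the $V_2^{<}$ side then comfortably absorbs.
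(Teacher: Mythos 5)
Your proof is correct and takes essentially the same route as the paper's: same two-way split of $V_2$, same exact computation of $\Pr[v\in S]$ on each part, and the same two ingredients at the end ($\sum_{v:c_v>\tau}c_v\leq\sum_{v:c_v<1-\tau-\epsilon}c_v$ from the choice of $\tau$, and $|\{v:c_v>\tau\}|\leq 2\sum_{v:c_v>\tau}c_v$ from $\tau\geq 1/2$). The paper merely applies these two bounds in the opposite order, first the $\tau$-balancing inequality and then the cardinality bound, landing on $\epsilon|A|\geq 0$ where you land on $2\epsilon\sum_{v:c_v>\tau}c_v\geq 0$; the arguments are interchangeable.
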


\begin{proof}
Let us start by defining  $$A=\{v: c_v> \tau\} \;\;\;\;\text{ and }\;\;\;\; B=\{v: c_v<1-\tau-\apx\},$$
where $V_2= A\cup B$.
Consider sets $P$ and $M$ from Algorithm~\ref{alg3} which form its output. That is $S= M\cup P$. Note that by definition of $P$ we have $A\subset P$ and as a result $\Pr[v\in S]=1$ for any $v\in A$. Therefore,
$$ b_v - \sigma_v \stackrel{\eqref{eq:cost}, \eqref{eq:budget}}{=}  \max\Big((1.5+\epsilon)c_v- \Pr[v\in S], 0\Big) - \max\Big(\Pr[v\in S] - (1.5+\epsilon)c_v, 0\Big) =  (1.5+\epsilon)c_v-1.$$
Moreover, we have $B\cap P = \emptyset$ and by Observation~\ref{obs:cv}, for any $u\in B$ we have  $\Pr[u\in M] = c_u.$ This implies $\Pr[u\in S] = c_u$ and as a result the budget of  vertex $u$ is
$$b_u= \max\Big((1.5+\epsilon)c_u - \Pr[v\in P\cup M],\, 0\Big) = (1.5+\epsilon)c_u - c_u =  c_u(0.5+\epsilon), $$ and $\sigma_u=0$. Putting these together gives us 
\begin{align*}
    \sum_{v\in V_2} (b_v - \sigma_v) &= \sum_{v\in A} (b_v - \sigma_v)  + \sum_{u\in B} (b_u - \sigma_u) = \sum_{v\in A} \big( (1.5+\epsilon)c_v- 1\big)  + \sum_{u\in B} (0.5+\epsilon)c_u\\
    & = (1.5+\epsilon)\sum_{v\in A} c_v - |A|+ (0.5+\epsilon)\sum_{u\in B} c_u.
\end{align*}
Note that by definition of $\tau$, we have  $\sum_{v\in A} c_v \leq \sum_{u\in B} c_u$. Thus, we can write 

\begin{align*}
    \sum_{v\in V_2} (b_v - \sigma_v) & = (1.5+\epsilon)\sum_{v\in A} c_v - |A|+ (0.5+\epsilon)\sum_{u\in B} c_u \\ 
    &\geq  (2+2\epsilon)\sum_{v\in A} c_v - |A|.
\end{align*}
Since for any vertex $v\in A$ we have $c_v>\tau\geq 0.5$, this implies \begin{align*}
    \sum_{v\in V_2} (b_v - \sigma_v) & \geq  (2+2\epsilon)\sum_{v\in A} c_v - |A| \geq (2+2\epsilon)\sum_{v\in A} 0.5 - |A| \geq \epsilon|A|\geq 0.
\end{align*}
This concludes the proof.
\end{proof}

\begin{lemma} \label{lemma:S3}
Consider $\tau$ defined in Algorithm~\ref{alg1}, and let
$$V_3=\{v: 0.5< c_v \leq \tau\}\cup \{v: 1-\tau-\apx\leq c_v< 0.5-\apx\},$$
This set satisfies $$\sum_{v\in V_3} b_v - \sum_{v\in V_3} \sigma_v \geq 0,$$ where $b_v$ and $\sigma_v$, the budget and cost of vertex $v$ are  defined in \eqref{eq:budget} and \eqref{eq:cost} with respect to the solution $S$ outputted by Algorithm~\ref{alg3}.
\end{lemma}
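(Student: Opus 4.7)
My plan is to reduce the statement to a stochastic-dominance-type inequality that is forced by the minimality of $\tau$. The starting point is Claim~\ref{claim:uieyuewf}: every vertex $v \in V_3$ has $c_v \in [1-\tau-\apx, \tau]$, so $\Pr[v\in S] = c_v(2-c_v)$. Substituting into \eqref{eq:budget} and \eqref{eq:cost} gives
$$b_v - \sigma_v \;=\; (1.5+\apx)c_v - c_v(2-c_v) \;=\; c_v\,(c_v - 0.5 + \apx).$$
Writing $V_3 = A' \cup B'$ with $A' := \{v: 0.5 < c_v \leq \tau\}$ and $B' := \{v: 1-\tau-\apx \leq c_v < 0.5-\apx\}$, this expression equals $b_v>0$ on $A'$ (where $\sigma_v=0$) and $-\sigma_v<0$ on $B'$ (where $b_v=0$). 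So the lemma reduces to showing
$$\sum_{v \in A'} c_v\,(c_v - 0.5 + \apx) \;\geq\; \sum_{v \in B'} c_v\,(0.5 - c_v - \apx).$$

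The heart of the proof is a layer-by-layer dominance extracted from the minimality of $\tau$. Since $\tau$ is the \emph{smallest} value in $[0.5,1]$ with $\sum_{c_v>\tau} c_v \leq \sum_{c_v<1-\tau-\apx} c_v$, for every $\tau' \in [0.5,\tau)$ the reverse strict inequality holds. Subtracting the defining inequality at $\tau$ from the strict inequality at $\tau'=\tau-s$ cancels the $V_2$-contributions (vertices with $c_v>\tau$ or $c_v<1-\tau-\apx$) and yields, for every $s \in (0,\tau-0.5]$,
$$\sum_{\mathclap{v\in A':\, c_v > \tau - s}} c_v \;\;\geq\;\; \sum_{\mathclap{v\in B':\, c_v < 1-\tau-\apx + s}} c_v.$$
I would then reparametrize each $v \in V_3$ by its depth from the boundary: $x_v := \tau - c_v$ for $v \in A'$ and $x_v := c_v - (1-\tau-\apx)$ for $v \in B'$, both lying in $[0, \tau-0.5)$. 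In these coordinates the above inequalities are precisely the pointwise dominance $F_A(s) \geq F_B(s)$ on $[0,\tau-0.5]$, where $F_A(s) := \sum_{v\in A':\, x_v<s} c_v$ and $F_B(s) := \sum_{v\in B':\, x_v<s} c_v$.

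To finish, I invoke a standard layer-cake fact: if $F_A \geq F_B$ pointwise, then $\sum_{A'} c_v h(x_v) \geq \sum_{B'} c_v h(x_v)$ for every non-increasing, non-negative $h$ on $[0, \tau-0.5]$. I apply this with the tent function $h(x) := (\tau-0.5) - x$; direct substitution gives $h(\tau - c_v) = c_v - 0.5$ on $A'$ and $h(c_v - (1-\tau-\apx)) = 0.5 - \apx - c_v$ on $B'$, so
$$\sum_{v\in A'} c_v\,(c_v - 0.5) \;\geq\; \sum_{v\in B'} c_v\,(0.5 - \apx - c_v).$$
Since $c_v(c_v - 0.5 + \apx) \geq c_v(c_v - 0.5)$ for each $v \in A'$, the reduced inequality follows. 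I expect the main obstacle to be the middle step—distilling the single scalar defining condition of $\tau$, together with its minimality, into the full pointwise dominance $F_A \geq F_B$ on the $V_3$-level sets; this is the one place where the precise choice of threshold is essential. Once the dominance is in hand, the remaining pieces are a routine rearrangement together with a standard coupling/layer-cake argument.
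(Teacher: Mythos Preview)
Your proof is correct, and it rests on the same core fact the paper uses: the minimality of $\tau$ forces, for every $s\in(0,\tau-0.5]$, the ``layer'' inequality $\sum_{v\in A':\,c_v>\tau-s} c_v \ge \sum_{u\in B':\,c_u<1-\tau-\epsilon+s} c_u$ (the paper records this, indexed by the sorted $c_{u_i}$'s, as Equation~\eqref{eq:jbgrhe}). Where you diverge is in how you convert this dominance into the budget/cost inequality. The paper proceeds combinatorially: it proves a pairwise estimate (Claim~\ref{claim:bjfjhbr}, namely $b_v-\sigma_u\ge(\sigma_u/c_u)(c_v-c_u)$ whenever $c_u\ge 1-c_v-\epsilon$) and then runs an induction over the vertices of $B$ in increasing $c_u$ order, carrying along the strengthened hypothesis $\sum_{A_i}b_v-\sum_{B_i}\sigma_u\ge(\sigma_{u_i}/c_{u_i})(\sum_{A_i}c_v-\sum_{B_i}c_u)$. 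You instead repackage the layer inequalities as pointwise dominance $F_A\ge F_B$ of weighted CDFs in the depth variable $x_v$, then integrate against the tent function $h(x)=(\tau-0.5)-x$ via the identity $\sum c_v\,h(x_v)=\int_0^{\tau-0.5}F(s)\,ds$; the resulting inequality is exactly $\sum_{A'}c_v(c_v-0.5)\ge\sum_{B'}c_v(0.5-\epsilon-c_v)$, and the extra $+\epsilon$ on the $A'$ side finishes. Your route is shorter and avoids both the auxiliary Claim~\ref{claim:bjfjhbr} and the induction; the paper's route, on the other hand, is entirely elementary (no integration/Abel summation) and makes the ``who pays for whom'' picture explicit. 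The one place to be a touch more careful in a full write-up is the strict-versus-nonstrict inequalities when you subtract the defining condition at $\tau$ from the failed condition at $\tau'=\tau-s$; this only affects a finite set of $s$-values and hence does not change the integral.
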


Due to having a detailed and lengthy proof, we designate Section~\ref{section:proof} to the proof of this lemma.

\noindent Below, we restate our main theorem and give a formal proof for the approximation ratio and the number of queries that our algorithm requires. Later in Section~\ref{section:call}, we explain how we can get the same bounds using only $(n\log n/\epsilon^2)$ calls to the MVC oracle. \\
\smallskip

\noindent\textbf{Theorem~\ref{thm:upper-mvc} (restated).}
\emph{
For any $\epsilon\in (0,0.1)$, Algorithm~\ref{alg1} finds a vertex cover of $\mathcal{G}^\star$ with the expected size of at most $(1.5+\epsilon)\opt$ by querying $O(n/\epsilon p)$ total edges.
}
\smallskip


\begin{proof}
Due to Lemma~\ref{lemma:query}, we know that Algorithm~\ref{alg1}  only requires $O(n/\epsilon p)$ queries.  Let $S$ be the solution outputted by Algorithm~\ref{alg1}. By Observation~\ref{obs:itisavertexcover}, $S$ is a vertex cover of $\mathcal{G}^\star$ and by Observation~\ref{obs:alg1alg3} its expected size is upper-bounded by the output of Algorithm~\ref{alg3}. In Lemma~\ref{lemma:approxratio}, we prove that the output of Algorithm~\ref{alg3} is upper-bounded by $(1.5+\epsilon)\opt.$ Putting these together implies that $S$ is a vertex cover of $\mathcal{G}^\star$ with the expected size of at most  $(1.5+\epsilon)\opt$.
\end{proof}

\subsection{Proof of Lemma~\ref{lemma:S3}}
\label{section:proof}

 Let us define subsets $$A=\{v: 0.5< c_v \leq \tau\} \;\;\;\;\text{ and }\;\;\;\; B=\{v: 1-\tau-\apx\leq c_v< 0.5-\apx\},$$ where $V_3 = A\cup B$. By Claim~\ref{claim:uieyuewf}, for any vertex $v\in V_3$, we have $\Pr[v\in S]=c_v(2-c_v),$ thus
$$\Pr[v\in S]- (1.5+\epsilon)c_v= c_v(2-c_v)- (1.5+\epsilon)c_v=(0.5-\epsilon)c_v - c_v^2.$$ 
Observe that for vertices in $v\in A$ this term is non-positive, therefore this vertex has a zero cost and a budget of
$$b_v = c_v^2-(0.5-\apx)\cdot c_v.$$
On the other hand since $(0.5-\epsilon)c_u - c_u^2>0$ holds for any vertex $u\in B$, this vertex has a zero budget and a cost of
$$\sigma_u =(0.5-\apx)\cdot c_u- c_u^2$$ 

Now, we will show that we can use the budget of vertices in $A$ to pay the cost of the vertices in $B$. Proving that this is possible heavily relies on the way threshold $\tau$ is chosen. 

To complete the proof of this lemma, we need the two following claims.

\begin{claim}\label{claim:bjfjhbr}
For any pair of vertices $u\in B$ and $v\in A$, if  $c_u\geq 1-c_v-\epsilon$, then
 \begin{align*}   b_v - \sigma_{u} \geq (\sigma_{u}/c_{u})  (c_v  - c_{u}). \end{align*}	
\end{claim}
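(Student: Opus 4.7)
The plan is to reduce the claimed inequality to a purely algebraic statement in $c_u, c_v, \epsilon$, and then verify it using the hypothesis $c_u \geq 1 - c_v - \epsilon$.

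First I would recall the closed forms derived just above the claim. Since $v \in A$ has $c_v \in (0.5, \tau]$, its budget is $b_v = c_v^2 - (0.5-\epsilon)c_v$, and since $u \in B$ has $c_u \in [1-\tau-\epsilon,\, 0.5-\epsilon)$, its cost is $\sigma_u = (0.5-\epsilon)c_u - c_u^2$. In particular $\sigma_u/c_u = (0.5-\epsilon) - c_u$. Substituting these expressions, the inequality to prove becomes
\[
c_v^2 - (0.5-\epsilon)c_v \;-\; \bigl((0.5-\epsilon)c_u - c_u^2\bigr) \;\geq\; \bigl((0.5-\epsilon) - c_u\bigr)(c_v - c_u).
\]

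Next I would expand both sides and collect terms. The left-hand side equals $c_v^2 + c_u^2 - (0.5-\epsilon)(c_v+c_u)$, and the right-hand side equals $(0.5-\epsilon)c_v - (0.5-\epsilon)c_u - c_uc_v + c_u^2$. Subtracting the right from the left, the $c_u^2$ terms cancel, the $(0.5-\epsilon)c_u$ terms cancel, and what remains simplifies cleanly to
\[
c_v^2 + c_u c_v - 2(0.5-\epsilon)c_v \;=\; c_v\bigl(c_v + c_u - 1 + 2\epsilon\bigr).
\]

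Finally I would invoke the two hypotheses. Since $v \in A$ we have $c_v > 0.5 > 0$, so it suffices to show $c_v + c_u - 1 + 2\epsilon \geq 0$. The assumption $c_u \geq 1 - c_v - \epsilon$ gives $c_v + c_u \geq 1 - \epsilon$, hence $c_v + c_u - 1 + 2\epsilon \geq \epsilon \geq 0$, which completes the proof. There is no real obstacle here, just bookkeeping; the only point that warrants care is keeping track of signs when expanding $(\sigma_u/c_u)(c_v - c_u)$, since $\sigma_u/c_u = 0.5 - \epsilon - c_u$ is a positive quantity (as $c_u < 0.5 - \epsilon$) while $c_v - c_u$ is also positive, so the right-hand side is genuinely a nontrivial positive charge that the budget must absorb.
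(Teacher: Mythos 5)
Your proof is correct: the algebra is right, and the final reduction $b_v - \sigma_u - (\sigma_u/c_u)(c_v - c_u) = c_v(c_v + c_u - 1 + 2\epsilon) \geq 0$ follows immediately from $c_u \geq 1-c_v-\epsilon$ and $c_v > 0$. The paper reaches the same inequality but organizes the argument differently: it first establishes the \emph{per-unit} comparison $\sigma_u/c_u \leq b_v/c_v$, using the hypothesis only at that stage, and then deduces the claim via the one-liner $b_v - \sigma_u = c_v(b_v/c_v) - c_u(\sigma_u/c_u) \geq (\sigma_u/c_u)(c_v - c_u)$. Your route is a direct expand-and-factor verification, which is slightly shorter; the paper's route has the advantage that the intermediate ratio inequality $\sigma_u/c_u \leq b_v/c_v$ is the same kind of object (cost-per-unit vs.\ budget-per-unit) that drives the monotonicity $\sigma_{u_{j-1}}/c_{u_{j-1}} \geq \sigma_{u_j}/c_{u_j}$ used in the induction of Claim~\ref{claim:krjkf}, so it makes the charging scheme's structure more visible. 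Either way the content is the same.
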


\begin{proof}
 We start by proving  $\sigma_u/c_u\leq b_v/c_v$ as follows.
 \begin{align}
 \sigma_u/c_u &= \frac{(0.5-\apx)\cdot c_u-c_u^2}{c_u} = 0.5-\apx-c_u \nonumber \\
 & \leq 0.5-\apx-(1-c_v-\apx) = c_v-0.5 = \frac{c_v^2-(0.5-\apx)\cdot c_v}{c_v} - \apx \nonumber \\
 &\leq  b_v/c_v-\apx \nonumber\\
 &< b_v/c_v.
 \label{eq:jksetrgnjrejk}	
 \end{align}
Thus, we can write
$$b_v - \sigma_{u} =   c_v (b_v/c_v) - c_{u}(\sigma_{u}/c_{u})\stackrel{\eqref{eq:jksetrgnjrejk}}{\geq} (\sigma_{u}/c_{u})(c_v-c_u),$$
completing the proof of this claim.
\end{proof}

\begin{claim}\label{claim:krjkf}
Let us sort the vertices in $B$ in the increasing order of their $c_u$ with $u_i$ denoting the $i$-th vertex. We claim that for any $i\in |B|$ it is possible to pay the cost of the vertices in $B_i=\{u\in B: c_u \leq c_{u_i}\}$  with the budget of vertices in $A_i=\{v\in A: c_v\geq 1-c_{u_i}-\apx\}$. That is
\begin{align}\label{eq:jkbrnjf3r} \sum_{v\in A_i} b_v  - \sum_{u\in B_i} \sigma_u \geq 0. \end{align}
\end{claim}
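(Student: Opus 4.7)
The plan is to reduce the claim to an Abel-summation inequality built from two ingredients: (i) a partial-sum majorization $\sum_{k \leq j} d_k \geq \sum_{k \leq j} c_{u_k}$ obtained by probing the minimality of $\tau$, and (ii) Claim~\ref{claim:bjfjhbr}, which lets us convert budget into cost at the right per-unit rate.

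First I would show that for every $j \leq i$, $\sum_{v \in A_j} c_v \geq \sum_{u \in B_j} c_u$. Since $u_j \in B$, we have $c_{u_j} \in [1-\tau-\apx,\, 0.5-\apx)$, so $\tau^\star := 1 - c_{u_j} - \apx$ lies in $(0.5,\, \tau]$. Setting $\tau' = \tau^\star - \delta$ for a $\delta > 0$ small enough that $\tau' \in [0.5,\, \tau)$ and no $c_v$ value lies in the tiny intervals $(\tau^\star - \delta,\, \tau^\star)$ or $(c_{u_j},\, c_{u_j}+\delta)$, the minimality of $\tau$ gives $\sum_{v: c_v > \tau'} c_v > \sum_{v: c_v < 1-\tau'-\apx} c_v$, which by the choice of $\delta$ simplifies to $\sum_{v: c_v \geq 1-c_{u_j}-\apx} c_v \geq \sum_{v: c_v \leq c_{u_j}} c_v$. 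Splitting the left side into $A_j$ and $\{v: c_v > \tau\}$, splitting the right side into $B_j$ and $\{v: c_v < 1-\tau-\apx\}$, and then using $\tau$'s defining inequality $\sum_{v: c_v > \tau} c_v \leq \sum_{v: c_v < 1-\tau-\apx} c_v$ to cancel the two outer groups, what remains is exactly $\sum_{v \in A_j} c_v \geq \sum_{u \in B_j} c_u$.

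Next I would partition $A_i$ into nested layers $D_j := A_j \setminus A_{j-1}$ (with $A_0 := \emptyset$), so that $d_j := \sum_{v \in D_j} c_v$ satisfies $\sum_{k \leq j} d_k = \sum_{v \in A_j} c_v \geq \sum_{k \leq j} c_{u_k}$. Every $v \in D_j$ has $c_v \geq 1 - c_{u_j} - \apx$, so $v$ and $u_j$ are compatible in the sense of Claim~\ref{claim:bjfjhbr}; rewriting that claim gives $b_v \geq (\sigma_{u_j}/c_{u_j})\, c_v$. Summing over $v \in D_j$ and then over $j$ yields $\sum_{v \in A_i} b_v \geq \sum_{j=1}^{i} \rho_j d_j$, where $\rho_j := \sigma_{u_j}/c_{u_j} = (0.5-\apx) - c_{u_j}$, while by definition $\sum_{u \in B_i} \sigma_u = \sum_{j=1}^{i} \rho_j c_{u_j}$. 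Since the $c_{u_j}$'s are non-decreasing in $j$, the weights $\rho_j$ are non-negative and non-increasing.

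The proof then finishes by Abel summation: letting $S_j := \sum_{k \leq j}(d_k - c_{u_k}) \geq 0$ (from step one), summation by parts gives $\sum_{j=1}^{i} \rho_j (d_j - c_{u_j}) = \rho_i S_i + \sum_{j=1}^{i-1}(\rho_j - \rho_{j+1}) S_j \geq 0$. The main obstacle, and the step I would handle most carefully, is extracting the majorization in step one: picking the right perturbation $\delta$ to turn the strict minimality inequality into the non-strict statement I need, and then leveraging $\tau$'s own defining inequality to cancel the two boundary groups, so that $\sum_{v \in A_j} c_v \geq \sum_{u \in B_j} c_u$ holds uniformly in $j$ rather than only at a single threshold. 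Once that majorization is in hand, the layer decomposition and Abel summation are a light touch.
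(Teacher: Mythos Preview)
Your proposal is correct and follows essentially the same route as the paper: both combine the partial-sum majorization $\sum_{v \in A_j} c_v \geq \sum_{u \in B_j} c_u$ (which the paper asserts in one line as ``by definition of $\tau$'' while you derive it carefully by perturbing $\tau$) with Claim~\ref{claim:bjfjhbr} and the monotonicity of $\rho_j = \sigma_{u_j}/c_{u_j}$. The paper packages the last step as an induction on the strengthened inequality $\sum_{v\in A_j} b_v - \sum_{u\in B_j} \sigma_u \geq \rho_j\bigl(\sum_{v \in A_j} c_v - \sum_{u \in B_j} c_u\bigr)$, which is exactly your Abel summation unrolled.
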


\begin{proof}
In order to prove this claim, we prove the following stronger inequality via induction.
\begin{align}\label{eq:jliregkn} \sum_{v\in A_i} b_v  - \sum_{u\in B_i} \sigma_u   \geq (\sigma_{u_i}/c_{u_i}) \left(\sum_{v\in A_i} c_v  - \sum_{u\in B_i} c_u \right).\end{align}
Doing so proves this claim since by definition of $\tau$, for any $i\in |B|$ we have 
\begin{align} \sum_{v\in A_i} c_v  - \sum_{u\in B_i} c_u \geq 0. \label{eq:jbgrhe} \end{align}
As the base case of $i=1$, we need to prove that Equation~\ref{eq:jkbrnjf3r} holds for $A_1=\{v\in A: c_v \geq 1-c_{u_1}-\apx\}$ and $B_1=\{u_1\}$.  That is 
 \begin{align}  \sum_{v\in A_1} b_v - \sigma_{u_1} \geq (\sigma_{u_1}/c_{u_1}) \left(\sum_{v\in A_1} c_v  - c_{u_1} \right).\label{eq:iffjfjk} \end{align}
 
Since for any $v\in A_1$, we have $c_v\geq 1-c_{u_1}-\apx$, this inequality follows from Claim~\ref{claim:bjfjhbr} proving our base case. 
Now, as the induction step, we will prove Equation~\eqref{eq:jkbrnjf3r} for $i=j$ assuming that it holds for $i=j-1$. We can write 
\begin{align*}
\sum_{v\in A_{j}} b_v  - \sum_{u\in B_{j}} \sigma_u &=  \sum_{v\in A_{j-1}} b_v + \sum_{v\in A_j\setminus A_{j-1}}  b_v - \sum_{u\in B_{j-1}} \sigma_{u} - \sigma_{u_{j}} \\
&\geq \left(\sigma_{u_{j-1}}/c_{u_{j-1}}\right) \left(\sum_{v\in A_{j-1}} c_v  - \sum_{u\in B_{j-1}} c_u \right) + \left(\sum_{v\in B_j\setminus B_{j-1}}  b_v - \sigma_{u_{j}}\right)\\
&\geq \left(\sigma_{u_{j}}/c_{u_{j}}\right) \left(\sum_{v\in A_{j-1}} c_v  - \sum_{u\in B_{j-1}} c_u \right) + \left(\sum_{v\in B_j\setminus B_{j-1}}  b_v - \sigma_{u_{j}}\right),
\end{align*}
where the second inequality is due to the induction hypothesis and the last one is due to $$ \sigma_{u_{j-1}}/c_{u_{j-1}} \geq\sigma_{u_{j}}/c_{u_{j}}. $$
To prove the induction step, it suffices to show that the following holds. 

$$\sum_{v\in B_j\setminus B_{j-1}} b_v - \sigma_{u_{j}} \geq\left(\sigma_{u_{j}}/c_{u_{j}}\right)\left(\sum_{v\in B_j\setminus B_{j-1}} c_v - c_{u_{j}} \right).$$
Note that this is a general version of Equation~\eqref{eq:iffjfjk} above. Observe that by definition, for any $u\in B_j\setminus B_{j-1}$  we have $c_v\geq 1-c_{u_j}-\apx$. Thus, by Claim~\ref{claim:bjfjhbr}, we get
 $$\sum_{v\in B_j\setminus B_{j-1}} b_v - \sigma_{u_{j}} \geq\sum_{v\in B_j\setminus B_{j-1}} (\sigma_{u_{j}}/c_{u_{j}})(c_v - c_{u_{j}})  \geq \left(\sigma_{u_{j}}/c_{u_{j}}\right)\left(\sum_{v\in B_j\setminus B_{j-1}} c_v - c_{u_{j}} \right).$$
 This concludes the induction step and the proof of this lemma.
\end{proof}
Observe that proving Claim~\ref{claim:krjkf} also completes the proof of Lemma~\ref{lemma:S3}. Let $x=|B|$. Correctness of Claim~\ref{claim:krjkf} for $i=x$ implies $$\sum_{v\in A} b_v  - \sum_{u\in B} \sigma_u \geq 0.$$
Since vertices in $A$ have a zero cost, this also implies 
$$\sum_{v\in V_3}( b_v - \sigma_u) \geq 0,$$
completing the proof of Lemma~\ref{lemma:S3}.\qed

\section{Working with Approximate $c_v$s}
\label{section:call}

In this section, we discuss how we can implement Algorithm~\ref{alg1} with only polynomial calls to the MVC oracle. We also discuss in Remark~\ref{remark:jkrfj} that this oracle need not be exact. Note that in Algorithm~\ref{alg1}, we need to know $c_v$ of all the vertices (i.e., the probability that a vertex belongs to the minimum vertex cover). We can compute them exactly if we do not limit the number of calls to the oracle. However, we claim that it is possible to use an estimated value of these parameters in Algorithm~\ref{alg1} and still get the same bounds. In the algorithm below we first find an estimate for any  $c_v$ and then feed them to Algorithm~\ref{alg1}.

\vspace{2 mm}
\begin{tboxalg2e}{Oracle-efficient $3/2$-approximation algorithm}
\begin{algorithm}[H]
    \SetKwFunction{SVC}{SVC}
	\DontPrintSemicolon
	\SetAlgoSkip{bigskip}
	\SetAlgoInsideSkip{}
	\label{alg:estimate}
	\SetKwProg{Fn}{Function}{:}{}
 	\DontPrintSemicolon
Draw $t = \frac{n^2}{8 \epsilon^2} \ln\left(2n/\delta \right)$ realizations of $G$ and denote them by $\mathcal{G}_1, \dots, \mathcal{G}_t$. \\
For any $i\in [t]$, let $C_i=\text{MVC}(\mathcal{G}_1).$\\
For any vertex $v\in V$, let $\bar c_v$ be be the fraction of $C_i$'s that contain $v$.\\
Run Algorithm~\ref{alg1}  with parameters $\bar c_v$s and $\epsilon'=\epsilon/2$ (instead of $c_vs$ and $\epsilon$).
\end{algorithm}
\end{tboxalg2e}
\vspace{3 mm}
We first show that Algorithm~\ref{alg:estimate} returns  $\bar{c}_v$s that are within $\epsilon/2n$ of the respective $c_v$s, with high probability. 

\begin{claim} \label{cl:hoeffding}
Setting $\delta=1/n$ in Algorithm~\ref{alg:estimate}, we get $|c_v-\bar{c}_v|\leq \epsilon/2n$ for all $v\in V$ with probability at least $1-1/n$.
\end{claim}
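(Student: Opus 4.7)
The plan is a standard concentration argument: apply Hoeffding's inequality per vertex, and then union-bound over all $n$ vertices. First I would fix any $v \in V$ and define the indicators $X_i = \mathbbm{1}[v \in C_i]$ for $i \in [t]$. Because the realizations $\mathcal{G}_1, \ldots, \mathcal{G}_t$ are drawn independently from the same distribution as $\mathcal{G}^\star$ and the minimum vertex cover oracle is deterministic (a function of its input graph, as stipulated in Section~2), the $X_i$ are i.i.d. Bernoulli random variables each with mean $\Pr[v \in \text{MVC}(\mathcal{G}_i)] = c_v$. Hence $\bar{c}_v = \frac{1}{t}\sum_i X_i$ is an unbiased estimator of $c_v$ formed by averaging $t$ i.i.d. $[0,1]$-bounded random variables, putting us in the setting where Hoeffding applies directly.

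Next I would apply Hoeffding's inequality with deviation $\epsilon/(2n)$ to get
\[\Pr\bra{\abs{\bar{c}_v - c_v} > \tfrac{\epsilon}{2n}} \leq 2\exp\pa{-\tfrac{t \epsilon^2}{2n^2}}.\]
Plugging in the algorithm's choice $t = \Theta\pa{(n^2/\epsilon^2)\ln(n/\delta)}$ with $\delta = 1/n$ makes the right-hand side at most $1/n^2$. A union bound over the $n$ vertices then yields
\[\Pr\bra{\exists\, v \in V: \abs{\bar{c}_v - c_v} > \tfrac{\epsilon}{2n}} \leq n \cdot \tfrac{1}{n^2} = \tfrac{1}{n},\]
which is exactly the claimed bound.

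The main (really only) nontrivial point is ensuring that the $X_i$ truly are i.i.d.\ Bernoulli with mean $c_v$: this relies on both the independence of the $\mathcal{G}_i$ and the fact that $\text{MVC}$ is a genuine function of its input, so that identically-distributed graphs are mapped to identically-distributed covers. Once that is in place, what remains is a routine constant-tracking verification that the stated $t$ is large enough to drive $2\exp(-t\epsilon^2/(2n^2))$ below $1/n^2$; this is where one has to be careful with the exact constants but there is no conceptual difficulty.
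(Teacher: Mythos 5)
Your proposal is correct and takes essentially the same approach as the paper: define the indicators $X_i^v = \mathbbm{1}[v \in C_i]$, observe they are i.i.d.\ Bernoulli with mean $c_v$, apply Hoeffding's inequality with deviation $\epsilon/2n$, and union-bound over the $n$ vertices. The only cosmetic difference is bookkeeping — you bound the per-vertex failure probability by $1/n^2$ before taking the union, whereas the paper combines the two steps and bounds $2n\exp(-2t\alpha^2)$ by $\delta = 1/n$ in one go.
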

\begin{proof}
This proof follows from a simple application of the Hoeffding bound.
Let $\alpha = \epsilon/2n$.
Let $X^v_i = 1(v\in C_i)$ be an indicator variable for whether $v\in C_i$. Note that $X^v_i$ is a Bernoulli random variable with $\E[X^v_i] = c_v$ for any $i$ and $v\in V$. Furthermore, $\bar c_v = \frac{1}{t}\sum_{i=1}^t X^v_i$.
Using Hoeffding and union bound, we have
\[
\Pr[\exists v\in V \text{ s.t. } |\bar c_v - c_v| \geq \alpha ] \leq 
n \cdot \Pr\left[ \left| \frac{1}{t}\sum_{i=1}^t X^v_i - c_v \right| \geq \alpha  \right] \leq 2n \exp\left( -2 t \alpha^2 \right) \leq \delta = 1/n,
\]
where the last inequality is by the choice of $t = \frac{1}{2\alpha^2} \ln\left(\frac {2n}{\delta} \right)$. 

\end{proof}

Next, we show that Algorithm~\ref{alg:estimate} returns a  vertex cover of $\mathcal{G}^\star$ which w.h.p. has expected size of at most $(3/2+\epsilon)\opt + O(\epsilon)$ and queries only $O(n/p\epsilon)$ edges.

\begin{theorem}\label{claim:nkjehjf}
For any $\epsilon\in (0,0.1)$, Algorithm~\ref{alg:estimate} finds a vertex cover of $\mathcal{G}^\star$ with the expected size of at most $(3/2+\epsilon)\opt + O(\epsilon)$ by querying $O(n/\epsilon p)$ total edges. Moreover, this algorithm is oracle-efficient.
\end{theorem}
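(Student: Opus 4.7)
The plan is to reduce to the exact-value guarantee of Theorem~\ref{thm:upper-mvc}, using the slack $\epsilon' = \epsilon/2$ in the inner call to Algorithm~\ref{alg1} to absorb the per-vertex estimation error. Let $\mc{E}$ denote the event $\{|c_v - \bar c_v| \leq \epsilon/(2n)$ for all $v \in V\}$, which holds with probability at least $1 - 1/n$ by Claim~\ref{cl:hoeffding}. That the output is a vertex cover of $\mathcal{G}^\star$ is immediate from Observation~\ref{obs:itisavertexcover}, which does not depend on the $c$-values used to pick $P$.

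For the query complexity, I would replay the proof of Lemma~\ref{lemma:query} with $\bar c_v$ and $\epsilon'$ in place of $c_v$ and $\epsilon$. For any $e = (u,v) \in H$, either both endpoints lie in $[1-\tau-\epsilon', \tau]$---forcing $e$ to be uncovered by $C$---or $\bar c_u + \bar c_v \leq 1 - \epsilon'$, in which case on $\mc{E}$ we get $c_e \leq c_u + c_v \leq 1 - \epsilon' + \epsilon/n \leq 1 - \epsilon'/2$ for sufficiently large $n$. Markov's inequality applied to $\sum_e(1-c_e) = O(n/p)$ (Observation~\ref{claim:nkergkjer}) then bounds the second contribution by $O(n/(\epsilon p))$, while the first contributes $O(n/p)$ by applying Observation~\ref{claim:nkergkjer} to $C$.

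For the approximation ratio, I would redefine the budget and cost of each vertex with respect to $\bar c_v$ and $\epsilon'$, and verify that Lemmas~\ref{lemma:S1}--\ref{lemma:S3} carry over on $\mc{E}$. Every strict inequality in these lemmas carries slack of order $\epsilon' = \epsilon/2$, which dominates the $\epsilon/(2n)$ gap between the true $\Pr[v \in S] = c_v(2-c_v)$ (Claim~\ref{claim:uieyuewf}) and its estimated version $\bar c_v(2-\bar c_v)$; summed over $n$ vertices this gives at most $O(\epsilon)$ additional loss. The conditional bound thus becomes $\E[|S| \mid \mc{E}] \leq (3/2+\epsilon/2)\sum_v \bar c_v + O(\epsilon) \leq (3/2+\epsilon/2)(\opt + \epsilon/2) + O(\epsilon) = (3/2+\epsilon)\opt + O(\epsilon)$, using $\sum_v \bar c_v \leq \opt + n\cdot\epsilon/(2n)$ on $\mc{E}$. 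The contribution of $\mc{E}^c$ is at most $n \cdot \Pr[\mc{E}^c] \leq 1$, which is absorbed into $O(\epsilon)$ either by taking $\delta = \epsilon/n$ in Claim~\ref{cl:hoeffding} (only a constant-factor increase in $t$) or by noting it is dominated by the leading term whenever $\opt = \Omega(1)$.

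The main obstacle I anticipate is verifying robustness of the charging argument in Lemma~\ref{lemma:S3}, whose threshold $\tau$ and partition $V_1, V_2, V_3$ are all defined from $\bar c_v$ while the actual selection probabilities depend on the true $c_v$. I expect this to be controlled because every inequality in the induction of Claim~\ref{claim:krjkf} carries slack of order $\epsilon'$, whereas the per-vertex perturbation is only $\epsilon/(2n)$, so the accumulated loss across all vertices remains $O(\epsilon)$. Finally, oracle efficiency is immediate from the construction: Algorithm~\ref{alg:estimate} makes only $t = O(n^2 \log n / \epsilon^2)$ MVC oracle calls for estimating the $\bar c_v$'s, plus the constant number internally used by Algorithm~\ref{alg1}.
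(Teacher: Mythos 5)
Your proposal is correct and follows essentially the same route as the paper's proof sketch: establish the query bound by replaying Lemma~\ref{lemma:query} with $\bar c_v$ and $\epsilon'$ and absorbing the $O(\epsilon/n)$-per-vertex estimation error, then re-run the budget/cost argument of Lemmas~\ref{lemma:S1}--\ref{lemma:S3} with the estimated values and charge the additive slippage of $O(\epsilon)$. One small organizational point worth noting: the paper's accounting is slightly cleaner in that it separates the argument into (i) the purely algebraic fact that the budget/cost decomposition bounds the hypothetical quantity $\sum_{v\in A}1+\sum_{v\in B}(2\bar c_v-\bar c_v^2)+\sum_{v\in C}\bar c_v$, with no error at all because this expression is defined entirely in terms of $\bar c_v$, and (ii) a single conversion step showing this quantity differs from the true $\E[|S|]$ by at most $\epsilon$ on the high-probability event; you instead describe ``slack in the lemmas'' absorbing the discrepancy between $c_v(2-c_v)$ and $\bar c_v(2-\bar c_v)$, which works but makes the verification of Claim~\ref{claim:krjkf} under perturbation a bit more delicate than it needs to be. Your explicit treatment of the $\mc E^c$ contribution (via $\delta=\epsilon/n$) is actually more careful than the paper's sketch, which leaves this implicit.
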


\noindent\textit{Proof sketch}.
We will first give an upper-bound on the number of queries (edges in $H$ for Algorithm~\ref{alg1}).
Let $\alpha = \epsilon/2n$. Following the proof of Lemma~\ref{lemma:query}, it is easy to verify that if $e=(u,v)\in H$, then either $\bar{c}_v+\bar{c}_u\leq 1-\epsilon'$ or it is not covered in $C$. By Claim~\ref{cl:hoeffding}, w.h.p., we have $|c_v-\bar{c}_v|\leq \alpha$. Therefore, w.h.p., either $e$ satisfies $c_v+c_u\leq 1-\epsilon'+2\alpha \leq 1-\epsilon$ or it is not covered in $C$. Similar to the proof of Lemma~\ref{lemma:query}, this gives us $|H| = O(n/\epsilon p).$

 We next bound the approximation ratio of Algorithm~\ref{alg:estimate}. Let $\tau$ be the parameter in Algorithm~\ref{alg1} which is run  with parameters $\bar c_v$s and $\epsilon'=\epsilon/2$ (instead of $c_vs$ and $\epsilon$).  That is $\tau$ is the smallest number in $[0.5,1]$ such that $\sum_{v: \bar{c}_v>\tau} \bar{c}_v \leq \sum_{v: \bar{c}_v<1-\tau-\apx'} \bar{c}_v$.  We will partition the vertices to disjoint subsets based on the value of $\tau$.

\begin{itemize}
    \item $A=\{v\in V : \bar c_v>\tau\}.$
    \item $B=\{v\in V : \bar c_v\in [1-\tau-\epsilon,\tau]\}.$
    \item $C= \{v\in V: \bar c_v < 1-\tau-\epsilon\}.$
\end{itemize}
Let $S$ be the vertex cover outputted by this algorithm. By Algorithm~\ref{alg1}, vertices in set $A$ join $S$ with probability one, and vertices in set $C$ join this vertex cover with probability $c_v$ (the probability of $v$ joining \opt). Moreover, following the proof of Claim~\ref{claim:uieyuewf} we can verify that vertices in $B$ join $S$ with probability $2 c_v - c_v^2.$ That is 
$$\E[|S|] = \sum_{v\in A} 1 +  \sum_{v\in B} (c_v-c_v^2) +  \sum_{v\in C} c_v.$$
To prove our desired approximation ratio we will show  $$\E[|S|]\leq (3/2+\epsilon)\opt + O(\epsilon).$$
Recall Equation~\ref{eq:kjfjewjfnjer} which we use in proving Lemma~\ref{lemma:approxratio}:
\begin{align*}\sum_{v\in V} b_v - \sum_{v\in V} \sigma_v \geq 0. \end{align*}
Following our proof steps, one can verify that if Algorithm~\ref{alg1} is run with parameters $\bar c_v$ instead of $c_v$s, this equation implies
\begin{align}\label{eq:jkrfjkhhfbhjer}\sum_{v\in A} 1 + \sum_{v\in B} (2\bar c_v- \bar c_v^2) +  \sum_{v\in C} \bar c_v - (3/2+\epsilon)\opt \leq 0. \end{align}
Since due to Claim~\ref{cl:hoeffding}, for all $v\in V$, we have $|c_v-\bar{c}_v|\leq \epsilon/2n$ with high probability, the following equation also holds with high probability:

\begin{align}\sum_{v\in A} 1 + \sum_{v\in B} (2\bar c_v- \bar c_v^2) +  \sum_{v\in C} \bar c_v - \E[|S|] =  \sum_{v\in B} (2 \bar c_v- \bar c_v^2 - 2c_v +  c_v^2)+  \sum_{v\in C} (c_v-\bar c_v) \geq -2n(\epsilon/2n)\geq -\epsilon. \end{align}
Combining this with \eqref{eq:jkrfjkhhfbhjer} gives us $\E[|S|] - \epsilon - (3/2+\epsilon)\opt \leq 0$ and subsequently
$$\E[|S|] \leq (3/2+\epsilon)\opt  + O(\epsilon).$$ This completes the proof sketch.

\begin{remark}\label{remark:jkrfj}
We remark that the MVC oracle used in the above discussion need not be exact.
Our analysis simply gets an approximation factor relative to the total sum of $c_v$s, which is relative to the approximation power of the oracle’s.
Thus, our results can be seen as obtaining a $3/2  \alpha + \epsilon$ approximation factor given access to an  $\alpha$-approximate oracle.
One benefit of this observation is that it can directly tap into efficient heuristics, such as highly optimized integer programming tools, that are known to work well in practice and achieve highly optimal results, even though provably-efficient approximation algorithms of similar quality does not exist in theory. 
\end{remark}

\section{Tightness Under Mild Correlation}
\label{sec:lower}

In the previous sections, we exhibited an algorithm that gives a $(3/2 + \epsilon)$-approximation for stochastic graphs that have independently realized edges. Indeed, the analysis given in the previous section continues to hold for graphs with a small number of correlated edges. In this section, we show that for such graphs, a $(3/2 + \epsilon)$ approximation factor is tight. That is, we exhibit a stochastic graph with just a few correlated edges and show that any non-adaptive algorithm must have an approximation factor of $(3/2 - \epsilon)$ on this graph with high probability. Our arguments are based on the arguments given in Section 6 of \cite{AKL16}.

\begin{definition}[Mildly Correlated Graph]
\label{def:mildcorr}
We say that an stochastic graph $G = (V,E)$ is mildly correlated if the edge set $E$ can be partitioned into sets $E_1$ and $E_2$ such that the following are satisfied:
\begin{itemize}
    \item The edges in $E_1$ are realized independently from each other: for any $S_1 \subseteq E_1$
    \[\Pr\bigcap_{e \in S_1} \set{e \in G_r} = \prod_{e \in S_1}\Pr\bra{e \in G_r}\]
    \item The edges in $E_1$ are realized independently from those in $E_2$: for any $S_1 \subseteq E_1$ and $S_2 \subseteq E_2$,
    \[\Pr \bra{\bigcap_{e \in S_1} \set{e \in G_r} \enskip \biggr | \enskip \bigcap_{e \in S_2} \set{e \in G_r}} = \prod_{e \in S_1}\Pr\bra{e \in G_r}\]
    \item $E_2$ is small: $|E_2| = O(n)$. 
\end{itemize}
Notably, in our definition of a mildly correlated graph, the realizations of edges in $E_2$ may depend on those in $E_1$: we make no assumptions on probabilities of the form \[\Pr \bra{\bigcap_{e \in S_2} \set{e \in G_r} \enskip \biggr | \enskip \bigcap_{e \in S_1} \set{e \in G_r}},\] where $S_1 \subseteq E_1$ and $S_2 \subseteq E_2$.
\end{definition}

\begin{remark}
Given any mildly correlated stochastic graph $G$, and a parameter $\epsilon\in (0, 0.1)$, Algorithm~\ref{alg1}, outputs a vertex cover of  $\mathcal{G}^\star$ with expected size of at most $(3/2+\epsilon)$  using only $O(n/\epsilon p)$ queries without knowledge of the edge partitions $E_1$ and $E_2$.
\end{remark}

\begin{proof}
 To see this, we first show that Algorithm \ref{alg3} queries at most $O(n/\epsilon p)$ edges. Looking more closely at our argument in Lemma \ref{lemma:query}, our only use of the independence assumption comes from our use of Observation \ref{claim:nkergkjer}. Thus, to show that Lemma \ref{lemma:query} holds in the case of mildly correlated graphs, it suffices to prove that Observation \ref{claim:nkergkjer} holds in this setting as well. Formally, we show that if $\mathcal G$ is a random realization of a mildly correlated stochastic graph $G$, then the number of edges in $G$ not covered by $M$ is at most $O(n/p)$. As before, let $H = G[V\setminus M]$ be the subgraph induced by the complement of the vertex cover $M$. Again, we must have that none of the edges of $H$ are realized in $\mathcal G$ else one of the vertices of $H$ must lie in the vertex cover $M$. Letting $E_1$ and $E_2$ be the subsets of the graph's edge set $E$ as guaranteed by definition \ref{def:mildcorr}, we define $H_1 = H \cap G[E_1]$ and $H_2 = H \cap G[E_2]$. Notice that as $\abs{E_2} = O(n)$, we find that 
 \[\Pr\bra{\abs{H} \ge O(n/p)} = P \bra{\abs{H_1} + \abs{H_2} \ge O(n/p)} \le \Pr\bra{\abs{H_1} \ge O(n/p)} \le (2/e)^n\]
 where in the final inequality, we invoke Observation~\ref{claim:nkergkjer} on the stochastic graph $G[E_1]$, which does have independently realized edges. Thus, $\abs{H}$ does remain small even in this setting, and Lemma \ref{lemma:query} can be applied directly to see that we only sample $O(n/\epsilon p)$ edges. 
 
 Next, we show that the approximation guarantee continues to hold in this setting. The remaining analysis presented in Section \ref{sec:analysis} follows with just a simple change to Algorithm \ref{alg3}: when we sample realizations $\bar{\mathcal H}$ from $\bar H$ in line 4 of the algorithm, we should do so conditioned on the realization $\mathcal H^\star$ that we obtain in line 2. We note that our algorithm (Algorithm \ref{alg1}) does not actually require such capability to function, and we only use Algorithm \ref{alg3} for the purposes of analysis. Once we make this change, we again have that the vertex cover $M$ given in Algorithm \ref{alg3} is drawn independently from the same distribution as the true minimum vertex cover, which is all that is required for the remaining analysis to follow. Thus, Algorithm~\ref{alg1} gives the desired approximation ratio of 1.5 + $\epsilon$ while only sampling $O(n/\epsilon p)$ edges. 
\end{proof}

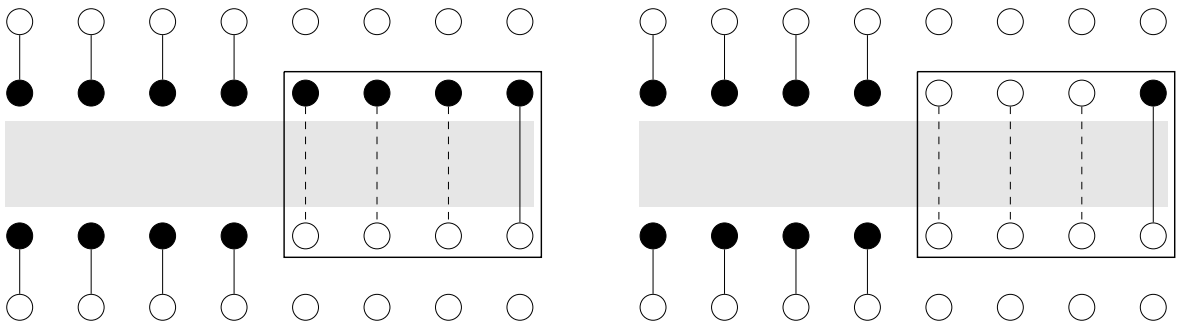
\begin{figure}[h]
    \begin{center}
    \begin{minipage}{0.49\textwidth}
    \centering
    \scalebox{0.95}{\begin{tikzpicture}
  \node[draw, shape=circle] (1tt) at (0 ,1) {};
  \node[draw, shape=circle] (2tt) at (1 ,1) {};
   \node[draw, shape=circle] (3tt) at (2 ,1) {};
  \node[draw, shape=circle] (4tt) at (3 ,1) {};
  \node[draw, shape=circle] (5tt) at (4 ,1) {};
  \node[draw, shape=circle] (6tt) at (5 ,1) {};
   \node[draw, shape=circle] (7tt) at (6 ,1) {};
  \node[draw, shape=circle] (8tt) at (7 ,1) {};

 \node[draw, shape=circle, fill=black] (1t) at (0 ,0) {};
  \node[draw, shape=circle, fill=black] (2t) at (1 ,0) {};
   \node[draw, shape=circle, fill=black] (3t) at (2 ,0) {};
  \node[draw, shape=circle, fill=black] (4t) at (3 ,0) {};
  \node[draw, shape=circle, fill=black] (5t) at (4 ,0) {};
  \node[draw, shape=circle, fill=black] (6t) at (5 ,0) {};
   \node[draw, shape=circle, fill=black] (7t) at (6 ,0) {};
  \node[draw, shape=circle, fill=black] (8t) at (7 ,0) {};
  
  \fill [color=black!10] (-0.2,-0.4) rectangle (7.2,-1.6);
  
   \node[draw, shape=circle, fill=black] (1b) at (0 ,-2) {};
  \node[draw, shape=circle, fill=black] (2b) at (1 ,-2) {};
   \node[draw, shape=circle, fill=black] (3b) at (2 ,-2) {};
  \node[draw, shape=circle, fill=black] (4b) at (3 ,-2) {};
  \node[draw, shape=circle] (5b) at (4 ,-2) {};
  \node[draw, shape=circle] (6b) at (5 ,-2) {};
   \node[draw, shape=circle] (7b) at (6 ,-2) {};
  \node[draw, shape=circle] (8b) at (7 ,-2) {};
  
   \node[draw, shape=circle] (1bb) at (0 ,-3) {};
  \node[draw, shape=circle] (2bb) at (1 ,-3) {};
   \node[draw, shape=circle] (3bb) at (2 ,-3) {};
  \node[draw, shape=circle] (4bb) at (3 ,-3) {};
  \node[draw, shape=circle] (5bb) at (4 ,-3) {};
  \node[draw, shape=circle] (6bb) at (5 ,-3) {};
   \node[draw, shape=circle] (7bb) at (6 ,-3) {};
  \node[draw, shape=circle] (8bb) at (7 ,-3) {};
  
  \draw[line width=0.2mm] (3.7,0.3) -- (7.3, 0.3) -- (7.3, -2.3) -- (3.7, -2.3) -- (3.7,0.3);
  
  \draw (1tt) -- (1t);
  \draw (2tt) -- (2t);
  \draw (3tt) -- (3t);
  \draw (4tt) -- (4t);
  
  
    \draw (1bb) -- (1b);
  \draw (2bb) -- (2b);
  \draw (3bb) -- (3b);
  \draw (4bb) -- (4b);
  
  
      \draw[dashed] (5t) -- (5b);
  \draw[dashed] (6t) -- (6b);
  \draw[dashed] (7t) -- (7b);
  \draw (8t) -- (8b);
\end{tikzpicture}}
    \end{minipage}
    \hfill
    \begin{minipage}{0.49\textwidth}
    \centering
    \scalebox{0.95}{\begin{tikzpicture}
  \node[draw, shape=circle] (1tt) at (0 ,1) {};
  \node[draw, shape=circle] (2tt) at (1 ,1) {};
   \node[draw, shape=circle] (3tt) at (2 ,1) {};
  \node[draw, shape=circle] (4tt) at (3 ,1) {};
  \node[draw, shape=circle] (5tt) at (4 ,1) {};
  \node[draw, shape=circle] (6tt) at (5 ,1) {};
   \node[draw, shape=circle] (7tt) at (6 ,1) {};
  \node[draw, shape=circle] (8tt) at (7 ,1) {};

 \node[draw, shape=circle, fill=black] (1t) at (0 ,0) {};
  \node[draw, shape=circle, fill=black] (2t) at (1 ,0) {};
   \node[draw, shape=circle, fill=black] (3t) at (2 ,0) {};
  \node[draw, shape=circle, fill=black] (4t) at (3 ,0) {};
  \node[draw, shape=circle] (5t) at (4 ,0) {};
  \node[draw, shape=circle] (6t) at (5 ,0) {};
   \node[draw, shape=circle] (7t) at (6 ,0) {};
  \node[draw, shape=circle, fill=black] (8t) at (7 ,0) {};
  
  \fill [color=black!10] (-0.2,-0.4) rectangle (7.2,-1.6);
  
   \node[draw, shape=circle, fill=black] (1b) at (0 ,-2) {};
  \node[draw, shape=circle, fill=black] (2b) at (1 ,-2) {};
   \node[draw, shape=circle, fill=black] (3b) at (2 ,-2) {};
  \node[draw, shape=circle, fill=black] (4b) at (3 ,-2) {};
  \node[draw, shape=circle] (5b) at (4 ,-2) {};
  \node[draw, shape=circle] (6b) at (5 ,-2) {};
   \node[draw, shape=circle] (7b) at (6 ,-2) {};
  \node[draw, shape=circle] (8b) at (7 ,-2) {};
  
   \node[draw, shape=circle] (1bb) at (0 ,-3) {};
  \node[draw, shape=circle] (2bb) at (1 ,-3) {};
   \node[draw, shape=circle] (3bb) at (2 ,-3) {};
  \node[draw, shape=circle] (4bb) at (3 ,-3) {};
  \node[draw, shape=circle] (5bb) at (4 ,-3) {};
  \node[draw, shape=circle] (6bb) at (5 ,-3) {};
   \node[draw, shape=circle] (7bb) at (6 ,-3) {};
  \node[draw, shape=circle] (8bb) at (7 ,-3) {};
  
  \draw[line width=0.2mm] (3.7,0.3) -- (7.3, 0.3) -- (7.3, -2.3) -- (3.7, -2.3) -- (3.7,0.3);
  
  \draw (1tt) -- (1t);
  \draw (2tt) -- (2t);
  \draw (3tt) -- (3t);
  \draw (4tt) -- (4t);
  
  
    \draw (1bb) -- (1b);
  \draw (2bb) -- (2b);
  \draw (3bb) -- (3b);
  \draw (4bb) -- (4b);
  
  
      \draw[dashed] (5t) -- (5b);
  \draw[dashed] (6t) -- (6b);
  \draw[dashed] (7t) -- (7b);
  \draw (8t) -- (8b);
\end{tikzpicture}}
    \end{minipage}
    \end{center}
    \caption{\textbf{A graphical depiction of $G$.} The middle $2n$ vertices (with edges depicted by the gray rectangle) is given by the Ruzsa-Szemer\'edi graph, and its edges are realized independently. We then select $M^*$ (boxed) uniformly at random from the induced matchings, and realize the corresponding exterior edges for all vertices not in $M^*$. With high probability, any algorithm that non-adaptively queries only $O(n)$ edges must cover almost every edge of $M^*$ (as depicted on the left, with vertices in the cover shown in black), but only the few edges in $M^*$ that are actually realized must be covered (as depicted on the right).}
    \label{fig:RS}
\end{figure}

\smallskip

\noindent\textbf{Theorem~\ref{thm:lower-mvc} (restated).}
\emph{
There exists a mildly correlated stochastic graph $G$ for which every non-adaptive algorithm must have an approximation ratio of at least $1.5 - \epsilon$ with probability $1 - o(1)$.  
}

\begin{proof}
 We define $G$ as follows. First, let an $(r, t)$-Ruzsa-Szemer\'edi graph be a bipartite graph on $2n$ vertices whose edge set may be partitioned into $t$ induced matchings of size $r$. Such graphs exist for $r = \frac{n}2 - \epsilon_1$ and $t = n^{\Omega(1/ \log \log n)}$ \cite{DBLP:conf/soda/GoelKK12}. We define the base graph of $G$ by starting with such a graph, and then augmenting it by adding one additional vertex and exterior edge for each of the $2n$ vertices in the Rusza-Szemer\`edi graph. We then realize all of the edges of the Rusza-Szemer\`edi graph independently with $p_e = \epsilon_2$ for all edges $e$ in the edge set. Next, we select one of these induced matchings $M_1, M_2, \dots, M_t$ at random, and call it $M^*$. For each of the $O(n)$ vertices of the Rusza-Szemer\`edi graph that do not participate in $M^*$, we realize its respective exterior edge. It is easy to see that this stochastic graph is mildly correlated, following Definition \ref{def:mildcorr} with $E_1$ denoting the edges of the Ruzsa-Szemer\'edi graph and $E_2$ denoting the exterior edges.
 
 To see that every non-adaptive algorithm must have an approximation ratio of at least $1.5 - \epsilon$, observe that as the algorithm may query at most $O(n) = o\pa{r \cdot t}$ edges, it follows from a simple counting argument that the set of edges that the algorithm queries must contain $o(r)$ edges for all but an $o(1)$ fraction of the matchings $M_1, \dots, M_t$. Thus, with probability $1 - o(1)$, the algorithm must cover $r - o(r)$ edges in $M^*$, and must further cover every exterior edge corresponding to vertices not in $M^*$. As the edges in $M^*$ and the exterior edges do not coincide, it follows that with probability $1 - o(1)$, the size of the vertex cover returned by any non-adapative algorithm must be equal to \[\underbrace{2(n - r)}_{\text{\# exterior edges that must be covered}} + \underbrace{r - o(r)}_{\text{\# edges in $M^*$ that must be covered}} = 1.5n + \epsilon_1 - o(n)\]
 However, observe that by the Chernoff bound, the probability that more than $2n\epsilon_2$ edges in $M^*$ are realized is upper bounded by $o(1)$. Thus, with probability $1 - o(1)$, the size of the minimum vertex cover of $G$ is given by
\[\underbrace{2(n - r)}_{\text{\# exterior edges that must be covered}} + \underbrace{2n\epsilon_2}_{\text{\# edges in $M^*$ that must be covered}} = n(1 + 2\epsilon_2) + \epsilon_1 \]
It thus follows by the union bound that with probability $1 - o(1)$ any such algorithm must have an approximation ratio of
\[\frac{1.5n + \epsilon_1 - o(n)}{n(1 + 2\epsilon_2) + \epsilon_1} \to \frac{1.5}{1 + 2 \epsilon_2} \ge 1.5 - \epsilon\]
for sufficiently large $n$, and appropriate choice of $\epsilon_2$ (note here that this parameter does not depend on $n$, only $\epsilon$). 
\end{proof}

\bibliographystyle{alpha}
\bibliography{refs}
	
\appendix

\end{document}